\newenvironment{proof}{%
    \textit{Proof:}%
}{%
    %
    \\
}
\newcommand{\E}{\mathsf{E}}
\newcommand{\R}{\mathbb{R}}
\newcommand{\N}{\mathcal{N}}
\newcommand{\eg}{{i.e.}, }
\newcommand{\I}{\mathfrak{I}}
\newcommand{\U}{\mathcal{U}}
\newcommand{\Q}{\mathcal{Q}}
\newcommand{\bbN}{\mathbb{N}}
\newcommand{\p}{\mathcal{P}}
\newcommand{\Y}{\mathcal{Y}}
\newcommand{\X}{\mathcal{X}}
\newcommand{\mO}{\mathcal{O}}
\newcommand{\W}{\mathcal{W}}
\newcommand{\V}{\mathcal{V}}
\newcommand{\T}{^{\mbox{\tiny \sf T}}}
\newcommand{\tr}{{\rm{tr}}}
\newcommand{\h}{\mathcal{H}}
\newcommand{\Z}{\mathcal{O}}
\newcommand{\xn}{X^{\rm{new}}}
\newcommand{\yn}{Y^{\rm{new}}}
\newcommand{\en}{e^{\rm{new}}}
\newtheorem{thm}{Theorem}[section]
\newtheorem{pr}[thm]{Proposition}
\newtheorem{rem}[thm]{Remark}
\newtheorem{lm}{Lemma}
\title{ 
Optimal Quantizer Scheduling and Controller Synthesis for Partially Observable  Linear Systems\footnote{
{This work has been supported by ARL under DCIST CRA W911NF-17-2-0181 and by ONR award N00014-18-1-2375.}}}
\author{Dipankar Maity  \thanks{Department of Electrical and Computer Engineering, University of North Carolina at Charlotte, Charlotte, NC, 28223, USA,
  ({dmaity@uncc.edu}).}
\and Panagiotis Tsiotras\thanks{Guggenheim School of Aerospace Engineering, Georgia Institute of Technology, Atlanta, GA, 30332, USA,
  ({tsiotras@gatech.edu}).}
}
\date{}
\begin{document}

\maketitle

\begin{abstract}
 In networked control systems, often the sensory signals are quantized before being transmitted to the controller. Consequently,  performance is affected by the coarseness of this quantization process. Modern communication technologies allow users to obtain resolution-varying quantized measurements based on the prices paid. 
 In this paper, we consider joint optimal controller synthesis and quantizer scheduling for a partially observed Quantized-Feedback Linear-Quadratic-Gaussian (QF-LQG) system, where the measurements are quantized before being sent to the controller.
  The system is presented with several choices of quantizers, along with the cost of using each quantizer. 
  The objective is to jointly select the quantizers and synthesize the controller to strike an optimal balance between control performance and quantization cost. 
  When the innovation signal is quantized instead of the measurement,  the problem is decoupled into two optimization problems: one for optimal controller synthesis, and the other for optimal quantizer selection.
   The optimal controller is found by solving a Riccati equation and the optimal quantizer selection policy is found by solving a linear program (LP)- both of which can be solved offline. 
    
\end{abstract}

\textit{Keywords:} Quantized optimal control, communication constrained control.


\section{Introduction}
%



Networked control systems operating under finite data-rate constraints employ signal quantization to reduce the amount of data for communication.
System-specific quantizers (encoders) and decoders are designed to compress signals with a finite number of bits and to incur minimal signal reconstruction errors, respectively. 
The available bit-rate to quantize the signals, as well as the choice of the quantizers and the decoders, determine the error in the reconstructed signals, and consequently, they affect the performance of the control system \cite{kostina2019rate, nair2003exponential}.
The quantizers used for networked control systems with limited data-rates are designed to ensure that the least amount of information is lost due to the encoding process.
To achieve this goal, often these quantizers must be time-varying and the dynamics of the time-varying quantizer parameters are tied to the dynamics of the networked control systems for optimal performance \cite{nair2003exponential}.

Time-varying quantizers provide the flexibility to send high resolution quantized signals when needed, and use a coarser resolution otherwise.
Typically, design of dynamic quantizers requires solving a joint optimization problem for the quantizer and the controller \cite{yuksel2013jointly} to obtain optimal performance.
Such co-design the easily becomes intractable due to the nonlinear/saturation behavior of the quantization process. 
Even for the linear-quadratic optimal control problem -- which is one of the simplest problems in optimal control for which an analytical closed-form solution exists -- becomes intractable when quantized measurements are fed back to the controller. 
In \cite{fu2012lack}, the authors show the lack of a \textit{separation principle} for an LQG system with quantized feedback. 
In \cite{yuksel2013jointly}, the authors demonstrated that a separation principle exists when \textit{predictive quantizers} are used. 
Furthermore, this work also demonstrated that the use of predictive quantizers can be made without loss of generality.
The principle behind predictive quantizers is not to quantize the state $X_t$ (which contains all the past control), but rather to quantize a signal that is obtained after removing all the past control  history from $X_t$.
While these works provide some characterization on the optimal quantizer, however, the exact solution of the optimal quantizer is not availabl. 
LQG control with quantization has been studied for a long time \cite{borkar1997lqg, liu1992optimal, tatikonda2000control, tatikonda1998control, tatikonda2004stochastic, williamson1989optimal} and many others. 
Owing to the intractability of the problem, these works do not readily provide the optimal quantizers.
An exception is \cite{bao2010iterative}, where an iterative method is proposed to find a quantizer and a controller for LQG systems.
In principle, the iterative method converges
in the case of open-loop encoder systems. 
However,  as mentioned in that work, the proposed iterative method does not necessarily converge for the general case with partial side information.
For the special case of open-loop encoder systems, the process is likely to converge to a local optimum instead to the global one. 

To circumvent the intractability associated with finding the optimal qunatizers, we formulate a problem to find the optimal quantizer from a \textit{given} finite collection of quantizers for an LQG system. 
In this way, our formulation becomes a quantizer scheduling/selection problem where the best quantizer at each time instance is selected from a given finite set.
We assume a partially observed linear system that can choose from a given set of quantizers to quantize its measurements and transmit the resulting quantized signal to the controller. 
The system can use different quantizers at different time instances to meet the need for time-varying quantizer resolution.
We further assume that these quantizers are costly to use and different quantizers have possibly different costs of operation.
The performance of the system is thus measured by an expected quadratic cost plus the total cost for using the quantizers.
Quantizers with higher resolution are generally more costly than ones with lower resolution.
Therefore, better control performance can be achieved at the expense of a higher quantization cost.
This way, our framework provides a control-quantization trade-off.

Some of the earlier works on quantization and control can be traced backed to   \cite{schweppe1968recursive,curry1970estimation, moroney1983issues, delchamps1989extracting}. 
These works do not necessarily focus on finding the optimal quantizer, but rather investigate the effects of a given quantizer in the system performance.
While optimality (which is measured by a weighted sum of a control and quantization costs) is the focus in the current paper, the works in \cite{delchamps1990stabilizing, wong1997systems,wong1999systems,ishii2003quadratic,li2004robust} consider stability of the system to be their focus.
In \cite{wong1997systems} and \cite{wong1999systems}, the authors explicitly considered the issues of quantization, coding and delay. 
The concept of \textit{containability} was used to study stability  of linear systems.
    A quantization scheme with time-varying quantization sensitivity was studied in \cite{brockett2000quantized} proving asymptotic stability of the system.  
    In \cite{liberzon2003stabilization} the author derived a relationship between the norm of the transition matrix and the number of values taken by the encoder to ensure global asymptotic stability. 
   Reference \cite{nair2003exponential} addressed the problem of finding the smallest data-rate above which exponential stability can be ensured.

In all abovementioned works, 
 the role of quantization has been proven to be  crucial. However, for a given control objective, how to schedule from a set of available quantizers, which have a cost associated with them, has not been addressed.
To the best of our knowledge, \cite{maityTN2021} is the first work where a joint optimization framework is considered to synthesize an optimal controller and schedule the optimal quantizers from a given set of costly quantizers.
This article extends the work of \cite{maityTN2021} by considering a partially observed system with noisy sensors, and, more importantly, it explicitly considers the nuisances of delay and out-of-order measurement availability.

\paragraph{Contributions}

The contributions of this work are as follows.
We show that quantizing the ``innovation signal" separates the controller synthesis problem from the quantizer selection problem. 
While the idea of innovation--quantization was originally proposed in \cite{borkar1997lqg} for a fully observed system with a deterministic initial state, 
in this work, we extend the innovation-quantization idea for partially observed systems with uncertain initial states.  
Furthermore, we explicitly consider delays in the arrival of the measurements at the controller. 
We study the optimal controller and show that the controller is of a certainty-equivalence type. 
The controller gains can be computed offline and they do not depend on the parameters of the quantizers.
 The analysis of the quantizer-selection problem reveals that the optimal strategy for the selection of the quantizers can also be computed offline by solving a simple linear programming problem.

 The rest of the paper is organized as follows: in Section \ref{S:prel} we discuss some background on random variables; in Section \ref{S:prob} we formally define the problem addressed in this paper; Section \ref{S:solution}  provides the structure for the optimal controller and the quantizer selection scheme. 
 Finally, we conclude the paper in Section~\ref{S:conclusion}.
 
\section{Preliminaries} \label{S:prel}

In this section we provide some background on random variables, the Hilbert space of random variables, condition expectation, and the orthogonal projection of random variables defined on a Hilbert space. 

Define the probability space $(\Omega, \sf F, \sf P)$ where $\Omega$ is the sample space, $\sf F$ is the set of events, and the measure $\mathsf{P}: \mathsf{F}\to [0,1]$ defines the probability of occurring an event. 
In this probability space, $X: \Omega \to \X$ is a random variable   defined as a measurable function from the sample space $\Omega$ to a measurable space $\X$, such that for any measurable set $S \subseteq \X$, $X^{-1}(S)=\{\omega\in \Omega: X(\omega) \in S\} \in \sf F$.
 $\E[X]$ denotes the expected value of $X$, with respect to $\mathsf P$, defined as $\E[X]=\int_\Omega X(\omega)\rm{d}\mathsf P(\omega)$.

Let us define the space $\h$ of real-valued ($\X=\R$) random variables $X:\Omega \to \R$  such that
\begin{align*}
\h=\{X|~\E[X^2] < \infty\}.
\end{align*}
For $X,Y \in \h$,  $\alpha X+\beta Y\in \h$ for all $\alpha,\beta \in \R$. 
The inner product in $\h$ is defined by
\begin{align*}
\langle X, Y \rangle =\E[XY].
\end{align*}

\textit{Fact 1} \cite[Section 4.2]{luenberger1997optimization}:  $\h$ is a Hilbert space.\\

Let $X_1,\ldots, X_\ell$ be a collection of $\ell$ random variables belonging to $\h$. 
The $\sigma$-field generated by these random variables is denoted as $\sigma(X_1,\ldots, X_\ell)$, and  the linear span of these random variables is denoted by  $\sigma^L(X_1,\ldots, X_\ell)=\{Y |Y=\sum_{i=1}^\ell c_iX_i, c_i \in \R\}$.
Clearly, we have that $ \sigma^L(X_1,\ldots, X_\ell) \subseteq \sigma(X_1,\ldots,, X_\ell)$. 
The function $g(X_1, \ldots,$ $X_\ell): \R^\ell \to \R$ is a measurable function of the random variables $X_1,\ldots,X_\ell$ if $g^{-1}(S) \in \sigma(X_1,\ldots,X_\ell)$ for all $S\subseteq \R$.
 Let $\mathcal{G}$ denote the set of all measurable functions $g( X_1,\ldots, X_\ell)$ of $\ell$ random variables $X_1,\ldots, X_\ell$. 
 The conditional expectation of a random variable $Y$ conditioned on the random variables $X_1,\ldots, X_\ell$, denoted as  $\E[Y|X_1,\ldots, X_\ell] \in \mathcal{G}$, is defined as \cite[Section 34]{billingsley2008probability} 
 \begin{align*}
 \int_{S} \E[Y|X_1,\ldots, X_\ell]\, \mathrm{d}\mathsf{P}=\int_S Y \mathrm{d}\mathsf{P},~~~~~\forall S \in \sigma(X_1,\ldots,X_\ell).
 \end{align*}

 The following Lemma is adapted from {\cite[Theorem 3.6]{speyer2008stochastic}}.
 
 \medskip
 
\begin{lm} \label{L:ortho} 
For any random variable $Y$, the solution to the optimization problem
\begin{align*}
\stackrel[g\in \mathcal{G}]{}{\inf} \E[(Y-g)^2]
\end{align*}
is $g^*(X_1,\ldots, X_\ell)=\E[Y|X_1,\ldots, X_\ell]$.\\
\end{lm}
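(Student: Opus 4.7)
The plan is to recognize the problem as an orthogonal projection in the Hilbert space $\h$, and to verify directly that $Z := \E[Y|X_1,\ldots,X_\ell]$ is the minimizer by means of the classical ``add and subtract'' trick together with the defining integral identity of the conditional expectation. Writing $X := (X_1,\ldots,X_\ell)$ for brevity, the candidate $Z$ belongs to the admissible set $\mathcal{G}$ by construction, so it suffices to prove $\E[(Y-Z)^2]\le \E[(Y-g)^2]$ for every $g\in\mathcal{G}$.

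The first step is to decompose, for arbitrary $g\in\mathcal{G}$,
\begin{align*}
\E[(Y-g)^2] \;=\; \E[(Y-Z)^2] \;+\; 2\,\E[(Y-Z)(Z-g)] \;+\; \E[(Z-g)^2].
\end{align*}
The third summand is manifestly nonnegative, and the first summand does not depend on $g$. Hence the entire argument reduces to showing that the cross term vanishes, after which the inequality $\E[(Y-g)^2]\ge \E[(Y-Z)^2]$ follows, with equality if and only if $g=Z$ almost surely, giving both optimality and essential uniqueness of $g^*=Z$.

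The main obstacle, and the step deserving the most care, is justifying that $\E[(Y-Z)(Z-g)]=0$ using only the integral definition of conditional expectation provided in the preceding paragraph. My plan is to prove the stronger statement that $\E[(Y-Z)\,h]=0$ for every $h\in\mathcal{G}$ whose square is integrable; specialising to $h=Z-g$ then finishes the proof. For indicators $h=\mathbf{1}_S$ with $S\in\sigma(X_1,\ldots,X_\ell)$, this is precisely the defining identity $\int_S Z\,\mathrm{d}\mathsf{P}=\int_S Y\,\mathrm{d}\mathsf{P}$. By linearity the identity extends to all simple $\sigma(X_1,\ldots,X_\ell)$-measurable functions, and then to general $h\in\mathcal{G}\cap\h$ via a standard monotone/dominated convergence argument using the Cauchy--Schwarz bound $|\E[(Y-Z)h]|\le \|Y-Z\|\,\|h\|$ in $\h$ (with $\|Y-Z\|<\infty$ once one observes that $Z$, being the best $L^2$ candidate, has finite second moment whenever $Y$ does; if $\E[Y^2]=\infty$ the lemma is vacuous since the infimum is $+\infty$).

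With the cross term thus shown to be zero, the decomposition collapses to $\E[(Y-g)^2]=\E[(Y-Z)^2]+\E[(Z-g)^2]$, which is minimised precisely at $g=Z=\E[Y|X_1,\ldots,X_\ell]$, establishing the claim. Geometrically, this is the statement that $\E[Y\mid X_1,\ldots,X_\ell]$ is the orthogonal projection of $Y$ onto the closed subspace of $\sigma(X_1,\ldots,X_\ell)$-measurable members of $\h$, and the cross-term calculation is nothing but verification of the defining orthogonality condition of that projection.
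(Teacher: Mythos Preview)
Your proof is correct and follows the standard Hilbert-space projection argument: decompose $\E[(Y-g)^2]$ around $Z=\E[Y\mid X_1,\ldots,X_\ell]$, then kill the cross term by extending the defining integral identity of conditional expectation from indicators to general square-integrable $\sigma(X_1,\ldots,X_\ell)$-measurable functions. Note, however, that the paper does not supply its own proof of this lemma; it simply cites \cite[Theorem~3.6]{speyer2008stochastic} and moves on. Your argument is exactly the classical one that reference (and most probability texts) give, so there is nothing to contrast---you have filled in what the paper deliberately omitted.
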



That is, $\E[Y|X_1,\ldots, X_\ell]$ is the projection of the random variable $Y$ onto the span of the $\sigma$-field $\mathcal{G}$ generated by $X_1,\ldots, X_\ell$. The projection error $Y-\E[Y|X_1,\ldots, X_\ell]$ is orthogonal to any measurable function $g(X_1,\ldots, X_\ell)\in \mathcal{G}$ (i.e., the error is orthogonal to the $\sigma$-field $\sigma(X_1,\ldots, X_\ell)$),
\begin{align*}
\langle Y-\E[Y|X_1,\ldots, X_\ell], g \rangle =0, \quad \forall g \in \mathcal{G}.
\end{align*}
The following Lemma, presented without proof, states that in the case of Gaussian random variables the conditional expectation can be represented as an affine combination of $X_1,\ldots, X_\ell$.
\begin{lm} \cite[Chapter 11]{davenport1958introduction}
Let $Y,X_1,\ldots, X_\ell$ be jointly Gaussian random variables. Then, there exists $c_0,\ldots, c_\ell \in \R$ such that
\begin{align*}
\E[Y|X_1,\ldots, X_\ell]=c_0+\sum_{i=1}^\ell c_iX_i \in \sigma^L(1,X_1,\ldots, X_\ell).
\end{align*}
\end{lm}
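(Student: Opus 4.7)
The plan is to identify the conditional expectation with an $L^2$-orthogonal projection onto the finite-dimensional subspace $\sigma^L(1,X_1,\ldots,X_\ell)$, then exploit the standard Gaussian fact that zero covariance upgrades to full independence. Concretely, by Lemma \ref{L:ortho} the conditional expectation $\E[Y|X_1,\ldots,X_\ell]$ is the element of $\h$ that minimizes $\E[(Y-g)^2]$ over \emph{all} measurable $g\in\mathcal{G}$. If I can exhibit an affine candidate $\hat{Y}=c_0+\sum_i c_iX_i$ and show that $Y-\hat{Y}$ is independent of $(X_1,\ldots,X_\ell)$, then $\E[Y-\hat{Y}\mid X_1,\ldots,X_\ell]$ is deterministic and can be absorbed into $c_0$, forcing $\E[Y\mid X_1,\ldots,X_\ell]=\hat{Y}$.

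First I would construct $\hat{Y}$ as the orthogonal projection of $Y$ onto $\sigma^L(1,X_1,\ldots,X_\ell)$, regarded as a subspace of the Hilbert space $\h$. Since this subspace is finite-dimensional (at most $\ell+1$), it is closed, so the Hilbert-space projection theorem guarantees existence of $\hat{Y}$ together with real coefficients $c_0,\ldots,c_\ell$ with $\hat{Y}=c_0+\sum_{i=1}^\ell c_iX_i$. The projection is characterized by the orthogonality conditions $\langle Y-\hat{Y},1\rangle=0$ and $\langle Y-\hat{Y},X_i\rangle=0$ for each $i$, which translate to $\E[Y-\hat{Y}]=0$ and $\mathrm{Cov}(Y-\hat{Y},X_i)=0$ for every $i=1,\ldots,\ell$.

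Next I would invoke joint Gaussianity. Because $\hat{Y}$ is an affine function of the jointly Gaussian vector $(Y,X_1,\ldots,X_\ell)$, the augmented vector $(Y-\hat{Y},X_1,\ldots,X_\ell)$ is again jointly Gaussian. For jointly Gaussian random vectors, zero covariance between one coordinate and all others is equivalent to (stochastic) independence of that coordinate from the rest. The covariance conditions established above therefore yield that $Y-\hat{Y}$ is independent of $(X_1,\ldots,X_\ell)$. Consequently $\E[Y-\hat{Y}\mid X_1,\ldots,X_\ell]=\E[Y-\hat{Y}]=0$ almost surely, and linearity of conditional expectation gives $\E[Y\mid X_1,\ldots,X_\ell]=\hat{Y}=c_0+\sum_{i=1}^\ell c_iX_i\in\sigma^L(1,X_1,\ldots,X_\ell)$, as required.

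The main technical point to watch is the potentially singular covariance matrix of $(X_1,\ldots,X_\ell)$: if one tried to solve for $c_1,\ldots,c_\ell$ via the normal equations $\Sigma c = k$ directly, one would need to worry about invertibility of $\Sigma$. I sidestep this by appealing to the abstract projection theorem, which produces $\hat{Y}$ uniquely (and the coefficients $c_0,\ldots,c_\ell$ up to the null space of the Gram matrix) without requiring non-degeneracy. Apart from this, the only ingredient used beyond Lemma \ref{L:ortho} is the standard Gaussian ``uncorrelated $\Leftrightarrow$ independent'' fact, so the argument is short.
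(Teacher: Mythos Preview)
Your argument is correct and is the standard route to this result: project $Y$ onto the closed affine subspace $\sigma^L(1,X_1,\ldots,X_\ell)$ in $\h$, read off the orthogonality conditions as zero mean and zero covariance of the residual with each $X_i$, upgrade uncorrelatedness to independence via joint Gaussianity, and conclude that the residual has zero conditional expectation. The handling of a possibly singular Gram matrix through the abstract projection theorem is the right way to avoid an unnecessary nondegeneracy assumption.

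There is nothing to compare against in the paper itself: the lemma is explicitly stated ``without proof'' and attributed to \cite[Chapter~11]{davenport1958introduction}. Your write-up is therefore more than the paper provides. One cosmetic remark: you announce that Lemma~\ref{L:ortho} will be the vehicle for identifying $\hat{Y}$ with the conditional expectation, but in the end you bypass it and compute $\E[Y-\hat{Y}\mid X_1,\ldots,X_\ell]$ directly from independence. Either route is fine, but the final version should be consistent---either drop the reference to Lemma~\ref{L:ortho} in the opening paragraph, or actually invoke it at the end by noting that $\hat{Y}$ achieves the minimum in $\inf_{g\in\mathcal{G}}\E[(Y-g)^2]$ because independence makes every measurable $g(X_1,\ldots,X_\ell)$ orthogonal to $Y-\hat{Y}$.
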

The study in \cite{akyol2012conditions} provides necessary and sufficient conditions for the conditional expectation $\E[Y|X_1,\ldots, X_\ell]$ to be a linear function of $X_1,\ldots, X_\ell$ when the variables are not jointly Gaussian.

The previous definitions and  lemmas can be extended to multi-dimensional random variables \cite{luenberger1997optimization, billingsley2008probability, speyer2008stochastic, davenport1958introduction}. 


 
\section{Problem Formulation} \label{S:prob}

Let us consider a linear discrete-time stochastic system  
\begin{align} \label{E:dyn}
X_{t+1}&=A_tX_t+B_tU_t+W_t,\\
Y_t&=C_tX_t+\nu_t,
\end{align}
where, for all $t\in \bbN_0 ~(= \mathbb{N}\cup\{0\})$, $X_t\in \R^n$, $U_t \in \R^m$ and $Y_t\in \R^p$, $A_t$, $ B_t$ and $C_t$ are matrices of compatible dimensions, $\{W_t\}_{t\in \mathbb{N}_0}$ and $\{\nu_t\}_{t\in \mathbb{N}_0}$ are two i.i.d noise sequences in $\R^n$ and $\R^p$  with statistics $W_0 \sim \N(0,\W)$ and $\nu_0\sim \N(0,\V)$, respectively, and   $W_k$, $\nu_j$ are independent for all $j,k \in \mathbb N_0$. 
The initial state, $X_0$, is also a Gaussian random variable distributed according to $\N(\mu_0,{\Sigma_x})$, and independent of the noises $W_t$ and $\nu_t$ for all $t\in \mathbb{N}_0$. 
For notational convenience, we will write $X_0=\mu_0+ W_{-1}$ where $W_{-1}\sim \N(0,\Sigma_x)$.
Thus, $X_0$, $W_k$, $W_\ell$, $\nu_i$ and $\nu_j$ are independent random variables for all $k,\ell,i,j = 0,1,\ldots$, such that $k\ne \ell$, and $i\ne j$. 
In what follows, we will consider $A_t, B_t$ and $C_t$ to be time invariant in order to maintain notational brevity. 
However, the extension of the results presented in the subsequent sections to time varying $A_t, B_t$ and $C_t$ is straightforward and does not require any further assumptions.

In this work, we address the quantized output feedback LQG (QO-LQG) optimal control problem defined as follows. 
Referring to Figure \ref{Fig:schematic}, we assume that $M$ quantizers are provided to quantize the measurement $Y_t$ and transmit the quantized output to the controller. 
The range of the $i$-th quantizer is denoted by  $\Q^i=\{q_1^i,q_2^i,\cdots,q_{\ell_i}^i\}$. 
Thus, the $i$-th quantizer has $\ell_i$  quantization levels. 
Without any loss of generality, we assume that $\ell_1\le \ldots \le \ell_M$.
Associated with the $i$-th quantizer, let $\p^i=\{\p^i_1,\p^i_2,\cdots,\p^i_{\ell_i}\}$ denote a partition of $\R^p$ such that $\p^i_j$ gets mapped to $q^i_j$ for each $j\in \{1,2,\cdots,\ell_i\}$. 
Specifically, one may think of the $i$-th quantizer as a mapping $g_i:\R^p\to \Q^i$ such that $g_i(y)=q^i_j$ if and only if $y\in \p^i_j$. 


The quantized measurements are transmitted through a  communication channel that has a finite data-rate.
 Consequently, some quantized measurements may need more than one time step to complete the sensor-to-controller transmission and the decoding at the controller's site \cite{arafa2020timely}, and hence, the availability of that measurement to the controller  will be delayed.  
 Furthermore, quantized signals of different lengths may experience  different amounts of delay, and hence,  out-of-order measurement availability is inevitable \cite{kam2013age}. 
 In this work, we do not adhere to any particular model for characterizing this delay, rather we simply consider the case where a quantized signal with larger number of bits may experience a longer delay before it is available to the controller.
  That is, the delay $d_i$ associated with the $i$-th quantizer is non-decreasing with $i$, \eg $d_1\le d_2\le \ldots \le d_M$. 
The number of quantization levels $\ell_i$ generally captures the resolution of the quantization, \eg a higher $\ell_i$ typically means a better resolution and lesser quantization error, but, at the same time, it induces longer delay $d_i$.  
Therefore, this work will also reveal the trade-off between choosing a coarser but faster quantization service versus a finer but delayed service. 
In fact, we will see later on that, for a finite-horizon optimal control problem, different resolution-delay (finer-delayed vs$.$ coarser-faster) characteristics are preferred at different time instances.

Associated with each quantizer there is an operating cost that must be paid in order to use this quantizer. 
Let $\lambda(\Q^i)=\lambda_i\in \R_+$ denote the cost associated with the $i$-th quantizer. 
For example, $\lambda_i \propto \log_2\ell_i$ represents the case where the 
cost is proportional to the code-length used to encode the output of the quantizer. 
This cost is also related to the delay associated with the controller.   
In this work, we do not adhere to any specific structure for $\lambda$.
 We just assume that the values of $\lambda_i$'s are given to us a priori. 
 If there is a cost for operating the communication channel, that cost can be also incorporated into $\lambda_i$. 


Note that, in contrast to previous works~\cite{ wong1997systems, elia2001stabilization}, 
we do not aim at designing a quantization scheme, rather a set of quantizers is already given by some service provider. 
Our objective is to optimally decide  which quantizer is to be requested for use at what time instances.
Also, we will assume that the costs $\lambda_i$ are determined by the service provider and presented to us a priori. 
Designing such costs in order to regulate the use of the quantizers is an equally interesting problem for the service provider that will be addressed elsewhere. 
We will further assume that the communication channel  between each quantizer and the controller always transmits the quantized information without any distortion.

\begin{figure}
\centering
\includegraphics[draft=false, width=0.7 \textwidth]{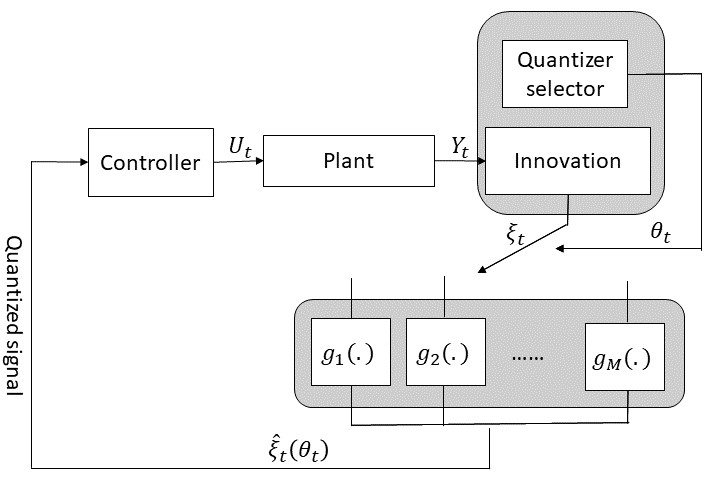}
\caption{Schematic diagram of the system. The top-right gray block contains the quantizer selector that selects the optimal quantizer at each time, and the innovation block that produces the innovation signals from the measurements. The down-right gray block contains the set of $M$ quantizers whose outputs are sent through the communication channel to the controller.} \label{Fig:schematic}
\vspace*{-10 pt}
\end{figure}

The objective is to minimize a performance index that takes into account the quantization cost. 
Contrary to the existing literature  on quantization-based LQG \cite{borkar1997lqg,tatikonda2000control, tatikonda1998control, tatikonda2004stochastic, williamson1989optimal, liu1992optimal}, in our case there are two decision makers instead of a single one: One decision-maker (the controller) decides the input ($\{U_t\}_{t\in \bbN_0}$) to apply to the system, and the other decision-maker (the quantizer-selector) decides the quality and delay of the measurements (quantized state values) which are transmitted to the controller.

 We introduce a new decision variable $\theta^i_t$ for the quantizer-selector in the following way:
\begin{align*}
\theta_t^i =\begin{cases} 1, ~~~&i\text{-th quantizer is used at time } t,\\
 0,~~~&\text{otherwise.}
\end{cases}
\end{align*}
Let us denote the vector $\theta_t\triangleq[\theta_t^1,\theta_t^2,\ldots,\theta_t^M]\T \in \{0,1\}^M$, that characterizes the decision of the quantizer-selector at time $t$. 
We enforce the quantizer-selector to select only one quantizer at any time instance, and hence   for all $t\in \bbN_0$, we have 
\begin{align}\label{E:ThetaConstraint}
\sum_{i=1}^M\theta_t^i=1.
\end{align}

The decoded measurement(s) available to the controller at time $t$ is denoted as $\hat O_t$. 
Note that $\hat O_t$ may contain delayed quantized measurements; also, several measurements may be made available simultaneously at the controller. 
For example, as shown in Figure \ref{F:out}, if there are two quantizers with $d_1=1$ and $ d_2=3$, and if the second quantizer is selected at time $0$ followed by the selection of the first quantizer at times $t=1,2$, then no decoded measurements are available at times $t=0,1$, \eg $\hat O_0=\hat O_1=\emptyset$.
The decoded information about $Y_1$, denoted as $\hat Y_1$, is available at time $t=2$, \eg $\hat{O}_2=\{\hat{Y}_1\}$, and the decoded information about $Y_0$ and $Y_2$ are available simultaneously at time $t=3$, \eg $\hat{O}_3=\{\hat{Y}_0,\hat{Y}_2\}$.
Thus, $\hat O_t$ is a function of $\{\theta_0,\ldots,\theta_t\}$ (to be precise, $\hat O_t$ is only a function of $\{\theta_{t-d_i}: ~i=1,\ldots, M,~ t-d_i\ge 0 \}$).
\begin{figure}[h]
\begin{center}
\begin{tikzpicture}
\draw[->, gray, thick] (0,0) -- (7.5,0) coordinate (time);
\foreach \x in {0,...,7}
\draw (\x cm,-1pt) -- (\x cm,1pt) node[below] {$\x$};
\node at (0,-.5) {$\hat O_0=\emptyset$};
\node at (1,-1) { $\hat O_1=\emptyset$};
\node at (2,-.5) { $\hat O_2=\{{\color{red}\hat Y_1}\}$};
\node at (3,-1) { $\hat O_3=\{{\color{teal}\hat Y_0},{\color{blue} \hat Y_2}\}$};
\node at (4,-.5) { $\hat O_4=\emptyset$};
\draw[->, color=teal] (0,0)  to [in=90, out =90] (3,0) node[anchor=south east] {$\hat{Y}_0$};
\draw[->, color=red!80] (1,0)  to [in=120, out =60] (2,0) ;
\node[color=red!80] at (2.15, 0.2) {$\hat{Y}_1$};
\draw[->, color=blue] (2,0)  to [in=120, out =60] (3,0) ;
\node[color=blue] at (3.25, 0.3) {$\hat{Y}_2$};
\draw[->, color=black] (3,0)  to [in=120, out =60] (6,0) ;
\draw[->, color=purple] (4,0)  to [in=120, out =60] (7,0) ;
\draw[->, color=brown] (5,0)  to [in=120, out =60] (6,0) ;
\end{tikzpicture}
\caption{Out-of-order measurement availability at the controller when the second quantizer (with delay $3$) is selected at times $t=0,3,4$ and the first quantizer (with delay $1$) is selected at other time instances. 
The new decoded measurements available at time $t$ at the controller is $\hat O_t$, \eg $\hat O_0=\hat O_1=\emptyset,\hat O_2= \{\hat Y_1\}, \hat O_2=\{\hat Y_0,\hat Y_2\}$, and so on. 
In this example, $\hat{Y}_1$ is available before $\hat{Y}_0$ and $\hat{Y}_5$ is available before $\hat{Y}_4$.} \label{F:out}
\end{center}
\vspace*{-10 pt}
\end{figure}
A detailed description of $\hat O_t$ will be provided later on.

Let us introduce the sets  $\Y_t\triangleq \{Y_0,Y_1,\cdots,Y_t\}$, $\hat{\mO}_t \triangleq \{\hat O_0,\hat O_1,\cdots,\hat O_t\}$ $\U_t\triangleq\{U_0,U_1,\cdots,U_t\}$ and $\Theta_t\triangleq\{\theta_0,\theta_1,\cdots,\theta_t\}$ to be the measurement history, quantized measurement history at the controller, control history, and quantization-selection history, respectively. 
For convenience, we will use the notation $\U$ for $\U_{T-1}$, and likewise, we will use $\Theta$ for $\Theta_{T-1}$.

The information available to the controller at time $t$ is $\I_t^c=\{\hat \mO_t,\U_{t-1}\}=\I_{t-1}^c\cup\{\hat O_t,U_{t-1}\}$ where $\I_0^c=\{\hat O_0\}$. 
It should be noted that $\I^c_t$ depends on $\Theta_t$ through $\hat\mO_t$. 
In classical optimal LQG control, the information available to the controller is not decided by any active decision maker, unlike the situation here.
An admissible control strategy at time $t$ is a measurable function from the Borel $\sigma$-field generated by $\I_t^c$ to $\R^m$. Let us denote such strategies by $\gamma^u_t(\cdot)$ and the space they belong to by $\Gamma^u_t$.
On the other hand, the information available to the quantizer-selector at time $t$ is $\I_t^q=\{\Y_{t},\hat \mO_{t-1},\U_{t-1},\Theta_{t-1}\}=\I_{t-1}^q\cup\{Y_t,\hat O_{t-1},U_{t-1},\theta_{t-1}\}$ where $\I_0^q=\{Y_0\}$.
 The information $\I^q_t$ will be used to generate a signal $\xi_t=f(\I^q_t)$ that will  further be quantized before being transmitted to the controller. 
If $f(\I^q_t)=Y_t$, then the output itself is quantized. 
The information $\bar{\I}^q_t=\{\hat \mO_{t-1},\Theta_{t-1}\} \subset \I^q_t$ will be used to decide the optimal quantizer to quantize $\xi_t$.
Thus, the admissible strategies for the selection of the quantizers are measurable functions from the Borel $\sigma$-field generated by $\bar \I_t^q$ to $\{0,1\}^M$. Let us denote such strategies by $\gamma^\theta_t(\cdot)$, and the space they belong to by $\Gamma^\theta_t$.
Thus, the entire quantization process is characterized by the following two equations:
\begin{subequations} \label{E:quantization}
\begin{align}
\xi_t=&f(\I^q_t),\\
\theta_t=&\gamma^\theta_t(\bar \I_t^q).
\end{align}
\end{subequations}

For brevity, we will often use $\gamma^u_t$ instead of $\gamma^u_t(\cdot)$ or $\gamma^u_t(\I_t^c)$, and $\gamma^\theta_t$ in place of $\gamma^\theta_t(\cdot)$ or $\gamma^\theta_t(\bar\I^q_t)$.
 Let $\gamma^\Theta$ denote the entire sequence $\{\gamma^\theta_0,\gamma^\theta_1,\cdots,\gamma^\theta_{T-1}\}$ and let $\Gamma^\Theta$ denote the space where $\gamma^\Theta$ belongs to. 
 Likewise, $\gamma^\U$ and $\Gamma^\U$ are defined similarly. Let us also define $\I^c=\{\I^c_t\}_{t=0}^{T-1}$ and $\I^q=\{\I^q_t\}_{t=0}^{T-1}$.
The sequence of decision making within one time instance is then as follows: \\
{{$$\I_t^q\overset{f,\gamma^\theta_t}{\rightarrow}  \{\xi_t,\theta_t\}\to \hat O_t \to \I_t^c\overset{\gamma^u_t}{\to} U_t\to X_{t+1}\to Y_{t+1}\to \I_{t+1}^q.$$}}


The cost function to be minimized cooperatively by the quantizer-selector and the controller is a finite horizon expected quadratic criterion,  given as
\begin{align} \label{E:cost}
J(\U,\Theta)=\E\left[\sum_{t=0}^{T-1}(X_t\T Q_1X_t+U_t\T RU_t+\theta_t\T \Lambda) +X_T\T Q_2X_T \right],
\end{align}
where $\Lambda=[\lambda_1,\lambda_2,\ldots,\lambda_M]\T $ is the cost for quantization, $Q_1,Q_2 \succeq 0$, $R\succ 0$, $\U=\gamma^\U(\I^c)=\{\gamma^u_0(\I_0^c), \gamma^u_1(\I_1^c),$ $\ldots,\gamma_{T-1}^u(\I_{T-1}^c)\}$ and  $\Theta=\gamma^\Theta(\bar\I^q)=\{\gamma^\theta_0(\bar\I_0^q), \gamma^\theta_1(\bar\I_1^q),$   $\ldots,\gamma_{T-1}^\theta(\bar\I_{T-1}^q)\}$.
We seek to find the optimal strategies $\gamma^{\U*}=\{\gamma_0^{u*},\gamma_1^{u*}, $ $\ldots,\gamma_{T-1}^{u*}\}$ and $\gamma^{\Theta*}=\{\gamma^{\theta*}_0,\gamma^{\theta *}_1,\ldots,\gamma^{\theta *}_{T-1}\}$ that minimize \eqref{E:cost}. We will also rewrite \eqref{E:cost} in terms of $\gamma^\U$ and $\gamma^\Theta$ as
\begin{align} \label{E:cost2}
J(\gamma^\U,\gamma^\Theta)=\E\Big[\sum_{t=0}^{T-1}(X_t\T Q_1X_t+U_t\T RU_t+&\theta_t\T \Lambda) +X_T\T Q_2X_T\nonumber\\&|~U_t=\gamma^u_t(\I^c_t), \theta_t=\gamma^\theta_t(\bar\I^q_t)\Big].
\end{align}

The cost function \eqref{E:cost2} is affected by the choice of the function $f(\I^q_t)$. 
Solving an estimation problem is intractable even when $\xi_t=f(\I^q_t)=Y_t$ and there is only one quantizer, let alone the control problem with multiple quantizers; for example, confer \cite{duan2008state, clements1972approximate, karlsson2005filtering, sviestins2000optimal} and the references therein.
 Although a linear quadratic Gaussian system is considered here, the non-linearity associated with the quantization process 
 makes the problem challenging, since quantization results in a nonlinear stochastic optimal control problem. 
To keep our analysis tractable, in this paper, we will consider 
$$
\xi_t=f(\I^q_t)=Y_t-\E[Y_t|Y_0,\ldots,Y_{t-1}],
$$ 
that is, the innovation signal.
 Quantizing the innovation signal not only makes the problem tractable, but also allows us to show that a separation principle between control and quantizer-selection is retained. 
It is well known \cite{kailath1968innovations} that the information contained in the innovation signals $\{\xi_0,\ldots,\xi_t\}$ is the same as the information contained in the observations $\{Y_0,\ldots,Y_t\}$. 
Therefore, designing an output-feedback controller is equivalent to designing an innovation-feedback controller.
 However, after quantization, the information contained in the quantized innovations is not necessarily the same as the information contained in the quantized outputs. 
 Therefore, in general, it cannot be claimed that the performance of the optimal output-quantized feedback  controller will be the same as that of the optimal innovation-quantized feedback  controller.

In the following, the information $\I_t^q=\{\Y_{t},\hat \mO_{t-1}, \U_{t-1}, \Theta_{t-1}\}$ will be divided into two parts, namely, $\{\Y_{t},\U_{t-1}\}$, which will be used for generating the innovation signals $\xi_t$, and $\bar\I_t^q=\{\hat \mO_{t-1},\Theta_{t-1}\}$, which will be used for selecting the quantizers. 
Therefore, \eqref{E:quantization} takes the form
\begin{subequations} \label{E:quant}
\begin{align}
\xi_t=&Y_t-\E[Y_t|\Y_{t-1},\U_{t-1}],  \label{E:quantA}\\
\theta_t=&\gamma^\theta_t(\bar\I_t^q) = \gamma^\theta_t(\{ \hat \mO_{t-1},\Theta_{t-1}\}).
\end{align}
\end{subequations}
At this point, one may notice that the presence of $\U_{t-1}$ is redundant in \eqref{E:quantA} since $U_t$ is a function of $\I^c_t$ which can be written as some function (that depends on $\gamma^u_t,\gamma^\theta_t,$ and $f$) of $\Y_t$.

\section{Optimal Control and Quantization Selection} \label{S:solution}

In this section we find the optimal $\gamma^{\U*}$ and $\gamma^{\Theta *}$ that minimize the cost function \eqref{E:cost2} amongst all admissible strategies, that is,
\begin{align}\label{E:argmin}
(\gamma^{\U*},\gamma^{\Theta*})=\underset{\gamma^\U\in \Gamma^\U,\gamma^\Theta\in \Gamma^\Theta}{\arg\min}J(\gamma^\U,\gamma^\Theta).
\end{align}

Before proceeding further to solve \eqref{E:argmin}, let us discuss, in some detail, the input for the quantization process since it will play a crucial role in the following analysis. Unlike other quantized feedback-based control approaches \cite{williamson1989optimal}, \cite{liu1992optimal}, we will quantize an innovation signal $\xi_{t}$ instead of $Y_t$ at time $t$. 
 The innovation signal $\xi_{t}$ can be readily computed from the measurement history $\Y_t$ as follows. 
 Let $\h$ be the Hilbert space of  random variables in $\R^p$ having finite covariances. 
 The observations $Y_0,Y_1,\ldots,Y_t$ belong to $\h$, and the $\sigma$-field generated by these random variables is denoted by $\sigma(\Y_t)=\sigma(Y_0,\ldots,Y_t)$. 
 With a slight abuse of notation, we will use $\Y_t$ to denote both the $\sigma$-field $\sigma(\Y_t)$ and the set of random variables $\{Y_0,Y_1,\ldots,Y_t\}$, whenever the context is not ambiguous. 
 These random variables may not necessarily be orthogonal, \eg $\E[Y_iY_j\T]\ne 0$. However, one can construct random variables $\xi_0,\xi_1,\ldots,\xi_t$ which are orthogonal and $\sigma(\xi_0,\ldots,\xi_t)=\sigma(\Y_t)$. 
 It can be shown that the random variable $\xi_i$ is of the form $\xi_i=Y_i-\E[Y_i~|\Y_{i-1}]$; see  \cite{kailath1968innovations}.
  In order to prove the orthogonality of $\xi_i,\xi_j$, let us consider $i> j$ (hence $\Y_{i-1}\supseteq \Y_j$), and observe that
 \begin{align*}
 \E[\xi_i \xi_j\T] &=\E\left[\E[\xi_i \xi_j\T~|\Y_j]\right] \\
 &=\E\left[\left(\E[\xi_i~|\Y_j]\right) \xi_j\T\right] \\
 &=\E\left[\left(\E[Y_i-\E[Y_i~|\Y_{i-1}]~|\Y_j]\right) \xi_j\T\right] \\
 &=\E[0\xi_j\T]=0.
 \end{align*}

    \subsection{The Innovation Process}

The control $U_t$ is a function of the quantized innovations which are not Gaussian random variables.
 Therefore, the state $X_t$ and the measurement $Y_t$ are no-longer Gaussian random variables under quantized innovation feedback. 
Although the innovation signal is a Gaussian random variable for partially observed classical linear-quadratic-Gaussian systems without quantization, in our case, this may no longer be true since the control is a function of quantized signals 
(which are not Gaussian random variables).
We therefore need to independently verify whether the distribution of the innovation signal is Gaussian or not. 

It can be verified that  the innovation $\xi_t$ is not affected by the control strategy, although, $Y_t$ is affected. 
 Furthermore, the innovation $\xi_t$ retains its Gaussian distribution and the parameters of this distribution can be computed offline. 
 This observation is presented in the following proposition.

%

\medskip

\begin{pr} \label{Pr:gaussian}
For all $t$, $\xi_t$ is a Gaussian random variable with zero mean and covariance $M_t$ such that
\begin{align*}
M_{t+1}&=C\Sigma_{t+1|t} C\T+\V\\
\Sigma_{t+1|t}&=A\Sigma_t A\T +\W, \quad \Sigma_{0|-1} = \Sigma_x\\
\Sigma_{t+1}&=\Sigma_{t+1|t}-\Sigma_{t+1|t}C\T M_{t+1}^{-1}C \Sigma_{t+1|t}.
\end{align*}
Moreover, the sequence of random variables $\{\xi_0,\ldots,\xi_t\}$ is uncorrelated for all $t$.
\end{pr}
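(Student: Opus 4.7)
The plan is to reduce the innovation of the controlled, non-Gaussian system to the innovation of a purely noise-driven Gaussian auxiliary system, and then invoke the classical Kalman filter on that auxiliary system.

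\textbf{Step 1 (decomposition).} First I would split the state into a noise-driven part and a control-driven part by defining $\tilde{X}_{t+1} = A\tilde{X}_t + W_t$ with $\tilde{X}_0 = X_0$, and $\bar{X}_{t+1} = A\bar{X}_t + BU_t$ with $\bar{X}_0 = 0$. By linearity of the dynamics, $X_t = \tilde{X}_t + \bar{X}_t$, where $\bar{X}_t = \sum_{s=0}^{t-1} A^{t-1-s} B U_s$ is a measurable function of $\U_{t-1}$ alone. Correspondingly, $Y_t = \tilde{Y}_t + C\bar{X}_t$ with $\tilde{Y}_t \triangleq C\tilde{X}_t + \nu_t$; the auxiliary pair $\{\tilde{X}_t,\tilde{Y}_t\}$ is driven only by the original Gaussian noises $X_0, \{W_t\},\{\nu_t\}$ and is therefore jointly Gaussian across time.

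\textbf{Step 2 (innovation identification).} Let $\tilde{\xi}_t \triangleq \tilde{Y}_t - \E[\tilde{Y}_t \mid \tilde{\Y}_{t-1}]$ be the innovation of the auxiliary system. I would prove by induction on $t$ the joint statement that $\xi_t = \tilde{\xi}_t$ and that $U_t,\theta_t$ are $\sigma(\tilde{\Y}_t)$-measurable. The base case is immediate since $\bar{X}_0 = 0$ gives $Y_0 = \tilde{Y}_0$. For the inductive step, the hypothesis implies that $\bar{X}_t$, $\U_{t-1}$, and $\Theta_{t-1}$ are all $\sigma(\tilde{\Y}_{t-1})$-measurable, which forces $\sigma(\Y_{t-1},\U_{t-1}) = \sigma(\tilde{\Y}_{t-1})$; pulling $C\bar{X}_t$ out of the conditional expectation yields
\begin{align*}
\E[Y_t \mid \Y_{t-1},\U_{t-1}] = \E[\tilde{Y}_t \mid \tilde{\Y}_{t-1}] + C\bar{X}_t,
\end{align*}
and hence $\xi_t = \tilde{\xi}_t$. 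The induction closes because $U_t$ and $\theta_t$ depend only on quantized innovations and quantizer selections, which are in turn measurable with respect to $\sigma(\tilde{\xi}_0,\ldots,\tilde{\xi}_t) = \sigma(\tilde{\Y}_t)$.

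\textbf{Step 3 (classical Kalman filter).} Once $\xi_t = \tilde{\xi}_t$ has been established, the statement follows directly from the standard Kalman filter applied to $(\tilde{X}_t,\tilde{Y}_t)$: joint Gaussianity together with Lemma \ref{L:ortho} and the linear-Gaussian conditional-expectation lemma gives that $\tilde{\xi}_t$ is a zero-mean Gaussian random variable, the sequence $\{\tilde{\xi}_s\}$ is mutually orthogonal, and the claimed recursions for $M_t$, $\Sigma_{t+1|t}$, and $\Sigma_{t+1}$ are precisely the standard predictor/innovation-covariance/posterior updates, initialized with $\Sigma_0 = \Sigma_x$.

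\textbf{Expected main obstacle.} The chief difficulty is that the controls $\U_{t-1}$ depend on quantized, non-Gaussian data, which a priori destroys the Gaussian structure of the observation process and prevents a naive appeal to Kalman filtering. The technical crux is therefore the measurability identity $\sigma(\Y_{t-1},\U_{t-1}) = \sigma(\tilde{\Y}_{t-1})$ in Step 2: establishing it by induction is what allows us to bypass the non-Gaussian observations and carry out the entire analysis inside the Gaussian auxiliary filtration, where the classical theory applies verbatim.
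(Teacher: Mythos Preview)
Your proposal is correct and follows essentially the same route as the paper: both arguments subtract the control-driven component to obtain a purely noise-driven auxiliary pair $(\tilde X_t,\tilde Y_t)$ (the paper's $(X^{\rm new}_t,Y^{\rm new}_t)$, which also removes the mean $A^t\mu_0$), identify $\xi_t$ with the auxiliary innovation, and then invoke the classical Kalman recursions. Your inductive verification of $\sigma(\Y_{t-1},\U_{t-1})=\sigma(\tilde\Y_{t-1})$ is in fact more explicit than the paper's treatment of the step that drops $\U_{t-1}$ from the conditioning; the only slip is the initialization, which should read $\Sigma_{0|-1}=\Sigma_x$ rather than $\Sigma_0=\Sigma_x$.
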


\begin{proof}
The proof is presented in Appendix~\ref{AP:1}.
\end{proof}
\vspace{6 pt}  
 Proposition \ref{Pr:gaussian} is equivalent of the following facts:
  \begin{enumerate}
\item The innovation sequence $\{\xi_t\}_{t \in \bbN_0}$ does not depend on the control history $\U_{t-1}$.
\item The innovation sequence is a Gaussian uncorrelated noise sequence with zero mean and covariance $M_t$.
\item Since the sequence of random variables $\{\xi_t\}_{t \in \bbN_0}$ is uncorrelated and Gaussian,  each $\xi_t$ and $\xi_k$ are independent for all $k\ne t$.
\end{enumerate}  
  
 
   \subsection{Implications of Delay}
 
 Let $g_i(\xi_{t}) \in \Q^i$ denote the quantized version of  $\xi_{t}$ if the $i$-th quantizer is selected. 
 Therefore, the  quantized information sent to the controller is
 \begin{align} \label{E:measure}
\hat \xi_t=\sum_{i=1}^Mg_i(\xi_{t})\theta^i_t,
 \end{align}
and this information will be decoded and be available at the controller at time $t+\sum_{i=1}^M\theta^i_td_i$.
  Notice that $g_i(\xi_t)\in \Q^i$ is a random variable, and hence  $\hat \xi_t$ is a random variable taking values in the discrete set $\cup_{i=1}^M\Q^i$ with $\mathsf{P}(\hat \xi_t=q^i_j)=\mathsf{P}( \xi_{t}\in \p^i_j)$. 
 
Since the delays may result in out-of-order availability of the decoded signal to the controller, it is important that every quantized signal is time-stamped, i.e., when the controller receives a decoded measurement $\hat q$ at time $t$, it should be able to uniquely determine which of the signals $\{\xi_0,\ldots,\xi_t\}$ was quantized to produce this measurement along with the quantizer that was used. 
In order to uniquely decode which of the signals $\{\xi_0,\ldots,\xi_t\}$ produced the data $\hat q$, the pair $(\hat{\xi}_t,i)$ will be sent at each time $t$, where $i$ is the index of the quantizer that was used to quantize $\xi_t$.  
Consequently, if the decoded pair $(\hat q,i)$ is received by the controller at time $t$, then the controller can immediately  infer that the $i$-th quantizer was used and that this signal is delayed by $d_i$ units, and hence $\hat q$ corresponds to $\xi_{t-d_i}$.
 Thus, $(\hat q,i)$ reveals that $\theta^i_{t-d_i}=1$, and $\hat q=g_i(\xi_{t-d_i})$.
 At any time $t$, there can be at most $M$ (delayed) new simultaneously available decoded measurements.
 Let us define the set of indexes that are present in $\hat{O}_t$ by
 \begin{align*}
 {\rm idx}_t=\{i: \exists q \in \R^p \textrm{  s.t.  }  (q,i) \in \hat O_t\} \subseteq \{1,\ldots,M\}.
\end{align*}  
 
 Therefore, $\theta_{t-d_i}^i=1$ if $i \in \textrm{idx}_t$, otherwise $\theta_{t-d_i}^i=0$. 
 It follows that the new decoded measurements available to the controller at time $t$ can be expressed as:
 \begin{align*}
 \{\theta_{t-d_1}^1,\ldots,\theta_{t-d_M}^M\}\cup\{\hat{\xi}_{t-d_i}: i \in \mathrm{idx}_t\}.
 \end{align*}
 With a slight abuse of notation, the above set is equivalent to:
 \begin{align*}
 \{\theta_{t-d_1}^1,\ldots,\theta_{t-d_M}^M,\theta_{t-d_1}^1\hat{\xi}_{t-d_1},\cdots, \theta_{t-d_M}^M\hat{\xi}_{t-d_M}\}.
 \end{align*}
Having characterized the effects of delays in the information available to the controller, in the next section, we discuss the optimal controller that minimizes the cost function \eqref{E:cost2}.

\subsection{Optimal Control Policy}

 Let us define the innovation history by  $\Xi_t\triangleq\{\xi_0,\ldots,\xi_t\}$. 
 With a slight abuse of notation, we also denote  $\Xi_t = \sigma(\xi_0,\ldots,\xi_t)$ to be the $\sigma$-field that generated these innovation signals.
 Let us further define the state estimate by
\begin{align}
\bar{X}_t\triangleq \E[X_t|\I^c_t].
\end{align} 
 The quantized information available to the controller at time $t$ is $\hat{\mO}_t=\{\vartheta_{0,t}\hat\xi_0,\vartheta_{1,t}\hat{\xi}_1, $ $\cdots,\vartheta_{t,t}\hat{\xi}_t\}\cup_{k=0}^t\{\theta_{k-d_i}^i: i=1,\ldots,M, k-d_i \ge 0\}$, where $\vartheta_{k,t}$ is an indicator of whether $\hat\xi_k$ is available at the controller by time $t$ or not.
 Note that  $\vartheta_{k,t}$ can be expressed as
 \begin{align}\label{E:vartheta}
 \vartheta_{k,t}=\sum_{i=0}^M\theta_{k}^i 1_{d_i\le t-k}.
 \end{align}

 Clearly, if $t-k\ge d_M$ for some $k$, then the above expression for $\vartheta_{k,t}$ becomes $\vartheta_{k,t}=\sum_{i=0}^M\theta_{k}^i=1$ ensuring that the quantized version of $\xi_k$ is present at the controller. 
  
 Similarly to $\hat{\mO}_t$, let us define the set $\Z_t=\{\vartheta_{0,t}\xi_0,\vartheta_{1,t}{\xi}_1,\cdots,\vartheta_{t,t}{\xi}_t\}\cup_{k=0}^t\{\theta_{k-d_i}^i: i=1,\ldots,M, k\ge d_i \}$, which contains the innovation signals that were quantized to produce $\hat{\mO}_t$ along with the corresponding indexes of the quantizers that were used. 
  Due to the construction of $\Z_t$, $\hat \mO_t$ does not contain any new information when $\Z_t$ is given\footnote{%
  $\mO_t$ contains the innovation signals that were quantized to produce $\hat{\mO}_t$ as well as the 
  corresponding indices of the quantizers that were used for quantization.
  Therefore, each $\hat\xi_k\in \hat \mO_t$ can be computed from $\mO_t$.}. 
 Therefore, we have
 \begin{align} \label{eq:bar_X_iterated_expectation}
 \begin{split}
     \bar X_t&=\E[X_t|\I^c_t]=\E[X_t|\hat \mO_t,\U_{t-1}]=\E[\E[X_t|\Z_t,\hat \mO_t,\U_{t-1}]|\hat \mO_t,\U_{t-1}] \\
     &=\E[\E[X_t|\Z_t,\U_{t-1}]|\hat \mO_t,\U_{t-1}]
 \end{split}
 \end{align}
 In order to compute $\bar X_t$, we compute $\E[X_t|\Z_t,\U_{t-1}]$ which is inside the outer expectation of the last equation.
 
\begin{lm} \label{L:inno}
For any $t$, 
\begin{align}
\E[X_t|\Z_t,\U_{t-1}]=A^t\mu_0+\sum_{k=0}^t\Psi(t,k)\vartheta_{k,t}\xi_k+\sum_{k=0}^{t-1}A^{t-1-k}BU_k,
\end{align}
and, for all $t\ge k$, the matrices $\Psi(t,k)$ are given by
\begin{align} \label{E:psi}
\Psi(t,k)=A^{t-k}\Sigma_{k|k-1}C\T M_k^{-1}.
\end{align}
\end{lm}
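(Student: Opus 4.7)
The plan is to expand $X_t$ in explicit closed form via the dynamics, peel off the deterministic control terms from the conditional expectation, and then apply the orthogonal-projection formula to the remaining noise part, leveraging the independence structure of the innovations established in Proposition \ref{Pr:gaussian}.

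First, iterating $X_{t+1}=AX_t+BU_t+W_t$ with $X_0=\mu_0+W_{-1}$ yields
\[
X_t = \mu_t + N_t + \sum_{k=0}^{t-1} A^{t-1-k} B U_k,\qquad N_t := A^t W_{-1} + \sum_{k=0}^{t-1} A^{t-1-k} W_k.
\]
Since $\mu_t$ is deterministic and $\U_{t-1}$ sits in the conditioning, the task reduces to evaluating $\E[N_t \mid \Z_t,\U_{t-1}]$. I would then argue that, for this purpose, the pair $(\Z_t,\U_{t-1})$ is information-equivalent to $\{\vartheta_{k,t}\xi_k\}_{k=0}^{t}$: by the recursive construction of the two policies $\gamma^\theta$ and $\gamma^u$ acting on the controller's information, each $\theta_s\in\Theta_t$ and each $U_s\in\U_{t-1}$ is a deterministic function of quantized innovations already delivered to the controller, which --- since the quantizer maps $g_i$ are deterministic and $\vartheta_{k,s}\le\vartheta_{k,t}$ for $s\le t$ --- depends only on $\{\xi_k:\vartheta_{k,t}=1\}$. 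Hence $(\Z_t,\U_{t-1})$ carries no additional information about $N_t$ beyond the available innovations.

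By Proposition \ref{Pr:gaussian}, $\{\xi_k\}_{k=0}^t$ is a mutually independent zero-mean Gaussian sequence with covariances $M_k$, jointly Gaussian with $N_t$, so the orthogonal-projection formula gives
\[
\E[N_t \mid \Z_t,\U_{t-1}] = \sum_{k=0}^{t} \vartheta_{k,t}\, \E[N_t \xi_k\T]\, M_k^{-1}\, \xi_k.
\]
The cross-covariance is computed by splitting $N_t = A^{t-k} V_k + \sum_{j=k}^{t-1} A^{t-1-j} W_j$ with $V_k := A^k W_{-1} + \sum_{j=0}^{k-1} A^{k-1-j} W_j$. The tail sum is uncorrelated with $\xi_k$ because $W_j$ for $j\ge k$ is independent of $\xi_k$. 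For the first piece, $\xi_k = C\tilde V_k + \nu_k$ where $\tilde V_k := V_k - \E[V_k\mid\Y_{k-1},\U_{k-1}]$ coincides with the Kalman one-step prediction error carrying covariance $\Sigma_{k|k-1}$; by orthogonality of $\tilde V_k$ to past innovations and independence of $\nu_k$ from $V_k$, one obtains $\E[V_k\xi_k\T]=\Sigma_{k|k-1}C\T$. Consequently $\E[N_t\xi_k\T] M_k^{-1}=A^{t-k}\Sigma_{k|k-1}C\T M_k^{-1}=\Psi(t,k)$, and substituting back recovers the claimed formula.

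The principal obstacle is the measurability reduction in the second step: one must verify, inductively in time, that every random variable appearing in $\Z_t$ and $\U_{t-1}$ --- including feedback-generated quantities such as $\theta_s$ and $U_s$ --- is $\sigma(\{\xi_k : \vartheta_{k,t}=1\})$-measurable, so that no latent dependence on not-yet-delivered innovations leaks into the conditioning and corrupts the Gaussian projection identity.
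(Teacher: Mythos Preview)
Your proof is correct and takes a genuinely different route from the paper's. The paper introduces the control-free process $X_t^{\rm new}=X_t-\sum_{k=0}^{t-1}A^{t-1-k}BU_k-A^t\mu_0$, applies standard Kalman-filter recursion to obtain $\E[X_t^{\rm new}\mid \Xi_t]=\sum_{k=0}^t\Psi(t,k)\xi_k$ by induction (identifying $\Psi(t,k)=A^{t-k}K_k$ via the gain recursion), and then conditions down using the tower property together with the one-line fact $\E[\xi_k\mid\Z_t]=\vartheta_{k,t}\xi_k$. You instead project $N_t$ directly onto the delivered innovations in a single Gaussian-projection step, reading off $\Psi(t,k)$ from the explicit cross-covariance $\E[N_t\xi_k\T]=A^{t-k}\Sigma_{k|k-1}C\T$. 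Your computation of that cross-covariance is clean and avoids the inductive Kalman argument altogether, which is a genuine simplification.

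The trade-off is exactly the point you flag: the paper's tower-property detour makes the random-conditioning issue almost trivial, because one only needs $\sigma(\Z_t,\U_{t-1})\subseteq\sigma(\Xi_t,\U_{t-1})$ for the outer step and then the scalar identity $\E[\xi_k\mid\Z_t]=\vartheta_{k,t}\xi_k$, which follows quickly from the fact that on $\{\vartheta_{k,t}=0\}$ every element of $\Z_t$ is a function of $\{\xi_j:j\neq k\}$. Your direct projection onto a random index set requires the full $\sigma$-field equivalence $\sigma(\Z_t,\U_{t-1})=\sigma(\{\vartheta_{k,t}\xi_k\}_{k\le t})$, which (as you note) needs an inductive argument that each $\theta_s$ and $U_s$ is measurable with respect to the already-delivered innovations; the key enabling fact there is the monotonicity $\vartheta_{j,s}\le\vartheta_{j,t}$ for $s\le t$, so any innovation used by the policy at time $s$ is still present in $\Z_t$. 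Both arguments are valid; the paper's is shorter on the conditioning side, yours is shorter on the covariance side.
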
 
 
\begin{proof}
The proof is given in Appendix~\ref{AP:inno}.
\end{proof} 
 
 Therefore, using Lemma~\ref{L:inno} we obtain from \eqref{eq:bar_X_iterated_expectation} that
 \begin{align} \label{E:x_bar_new}
 \begin{split}
     \bar X_t =& \E[\E[X_t|\Z_t,\U_{t-1}]|\hat \mO_t,\U_{t-1}] \\
     =&A^t\mu_0+\sum_{k=0}^t\Psi(t,k)\vartheta_{k,t} \E[\xi_k\mid \hat{\mO}_t]+\sum_{k=0}^{t-1}A^{t-1-k}BU_k,
     \end{split}
 \end{align}
 where, we have used the fact that $U_{t}$ is a measurable function of $\I^c_t=\{\hat{\mO}_t,\U_{t-1}\}$ for all $t$, and hence, given $\hat{\mO}_t$, the control history $\U_{t-1}$ does not provide any new information about $\xi_k$, i.e., $\E[\xi_k\mid \hat{\mO}_t,\U_{t-1}]= \E[\xi_k\mid \hat{\mO}_t]$.
 Next, we focus on computing $\E[\xi_k\mid \hat{\mO}_t]$.
 To that end, let us define $\bar{\xi}_t^i\triangleq \E[\xi_t|\hat\xi_t,\theta_t^i=1]$. Based on \eqref{E:ThetaConstraint} and \eqref{E:measure}, we may write 
\begin{align}
\bar{\xi}_t^i&= \E[\xi_t|g_i(\xi_t),\theta_t^i=1] \nonumber \\
&=\sum_{j=1}^{\ell_i}1_{g_i(\xi_t)=q^i_j}\E[\xi_t|g_i(\xi_t)=q^i_j,\theta_t^i=1]\nonumber \\
&=\sum_{j=1}^{\ell_i}1_{g_i(\xi_t)=q^i_j}\E[\xi_t|\xi_t\in \p^i_j] \nonumber\\
&=\sum_{j=1}^{\ell_i}1_{g_i(\xi_t)=q^i_j}\int_{\p^i_j}\xi\mathsf{P}_t(\mathrm d\xi|\p^i_j)
\end{align}
where $1_{a=b}$ is an indicator function that is equal to $1$ if and only if $a=b$, otherwise it equals  $0$. 
Therefore, $\bar\xi^i_t$ is a random variable taking values in the set $\{\int_{\p^i_j}\xi\mathsf{P}_t(\mathrm d\xi|\p^i_j): j=1,\ldots,\ell_i\}$ and it depends on the realization of $\xi_t$ through $1_{g_i(\xi_t)=q^i_j}$.
From Proposition \ref{Pr:gaussian}, one can easily compute  the measure $\mathsf{P}_t(\mathrm d\xi|\p^i_j)$ as follows
\begin{align}
\mathsf{P}_{t}(\mathrm d\xi|\p^i_j)&=\begin{cases} \alpha_t e^{-{\xi\T M_{t}^{-1}\xi}/{2}}\mathrm d\xi,
 ~~~~~& \xi \in \p^i_j,\\
0, ~~~&\text{otherwise},\end{cases}\\
(\alpha_t)^{-1}&=\sqrt{(2\pi)^p\det(M_t)}\mathsf{P}(\xi_t\in \p^i_j), \nonumber \\
&=\int_{\p^i_j}e^{-{\xi\T M_{t}^{-1}\xi}/{2}}\mathrm d\xi.
\end{align}
Furthermore, from Proposition \ref{Pr:gaussian}, we have that $\xi_t\sim\N(0,M_t)$. 
Since $M_t$ can be computed offline,  the prior distribution of $\xi_t$ is known to the controller. After receiving the quantized value $\hat\xi_t$, the controller updates the distribution of $\xi_t$. 
 If the quantized value of $\xi_t$, after being quantized by the $i$-th quantizer, is $\hat\xi_t=q^i_j$, then the controller can infer that  $\xi_t \in \p^i_j$.
  This is illustrated in Figure \ref{F:illus}.
  
\begin{figure}
\centering
\begin{tikzpicture}
\draw[<->] (0,0) -- (4,0) ;
\draw[->] (2,0) node[below] {$\underbrace{~~~~~~~~}_{\p^i_2}$} --(2,3) ;
\draw[<->]  (5,0) -- (9,0);
\draw[->] (7,0) --(7,3) ;
\draw[gray!80, step =.5 cm] (0.1, 0) grid (3.9, 2.9);
\draw[gray!80, step =.5 cm] (5.1, 0) grid (8.9, 2.9);

\def\fy{2.9*1/exp(((\y-7)^2)/1)}

\def\normaltwo{\x,{2.5*1/exp(((\x-2)^2)/1)}}
\def\normaltwoo{\x,{2.9*1/exp(((\x-7)^2)/1)}}

\draw[color=blue,domain=0.2:3.8] plot (\normaltwo) node[right] {};
\draw[color=blue,domain=6.5:7.5] plot (\normaltwoo) node[right] {};
\fill[color = orange, opacity =0.2] (1.5,0) rectangle (2.5,3);

\node at (3.25, -0.45) {$\underbrace{~~~~~~~~}_{\p^i_3}$};

\node at (0.75, -0.45) {$\underbrace{~~~~~~~~}_{\p^i_1}$};
\foreach \y in {6.5,7.5}
\draw[color=blue] ({\y},{\fy}) -- ({\y},0);
\draw[color=blue] (7.5,0) -- (9,0);
\draw[color=blue] (5,0)-- (6.5,0);
\node at (2,-1) {(a)};
\node at (7,-1) {(b)};

\end{tikzpicture}
\caption{(a): The blue curve denotes the prior distribution {$\mathsf{P}_t(\mathrm d\xi)$}. The partitions $\p^i_j$ for the $i$-th quantizer is shown as well where $\p^i_2$ is highlighted with the orange block. (b) The posterior distribution ({$\mathsf{P}_t(\mathrm d\xi|\p^i_2)$}) of $\xi_t$ is shown here for the case when the received quantized measurement $\hat\xi_t$ is $q^i_2$ (or equivalently, $\xi_t \in \p^i_2$).} \label{F:illus}
\end{figure}


  The entity $\bar{\xi}^i_t$ computes the expected value of $\xi_t$ given that the $i$-th quantizer was used in the process of quantization, and the quantized value is $\hat\xi_t \in \Q^i$.
   Now, let us further define
\begin{align} \label{E:xi_bar}
\bar{\xi}_t\triangleq \E[\xi_t|\hat{\xi}_t,\theta_t]=\sum_{i=1}^M\theta^i_t\bar{\xi}^i_t,
\end{align}   
and  
\begin{align} \label{E:xi_tilde}
\tilde\xi_t\triangleq \xi_t-\bar\xi_t.
\end{align}   
   From this definition of $\bar\xi_t$, along with the constraint $\sum_{i=1}^M\theta^i_t=1$, we have that $\bar\xi_t=\bar\xi_t^i$ if and only if the $i$-th quantizer was selected at time $t$.
    The conditional covariance $\mathcal{M}_t(\theta_t) \triangleq \E[\tilde{\xi}_t\tilde{\xi}_t\T \mid \theta_t]$ turns out to be  
    \begin{align} \label{E:mcal}
\mathcal{M}_t(\theta_t)=&\E\left[\xi_t\xi_t\T-\xi_t{\bar{\xi}_t}\T -\bar{\xi}_t\xi_t\T+\bar{\xi}_t\bar{\xi}_t\T \mid \theta_t \right] \nonumber \\
=&\E[\xi_t\xi_t\T\mid \theta_t]-\E[\bar{\xi}_t\bar{\xi}_t\T \mid \theta_t] 
= \E[\xi_t\xi_t\T\mid \theta_t] - \E[\bar{\xi}_t\bar{\xi}_t\T \mid \theta_t] ,
\end{align}
where we have used the fact that $\E[\xi_t\bar{\xi}_t\T \mid \theta_t ]=\E[\E[\xi_t\bar{\xi}_t\T \mid  \hat{\xi}_t,\theta_t]\mid \theta_t ] = \E[\E[\xi_t|\hat{\xi}_t,\theta_t]\bar{\xi}_t\T \mid \theta_t] = \E[\bar{\xi}_t\bar{\xi}_t\T \mid \theta_t]$. 
By defining $F_t(\theta_t) \triangleq\E[\bar{\xi}_t\bar{\xi}_t\T \mid \theta_t]$ and using the expression of $\bar{\xi}_t$ from \eqref{E:xi_bar}, we obtain
   \begin{align} \label{E:F}
   F_t(\theta_t)&=\E[\bar{\xi}_t\bar{\xi}_t\T\mid \theta_t] = \sum_{i=1}^M\theta^i_t\E[\bar{\xi}_t^i\bar{\xi}_t^{i\T}]=\sum_{i=1}^M\theta^i_t F^i_t,
   \end{align}
   where
   \begin{align} \label{E:Fi}
   F^i_t&=\E[\bar{\xi}_t^i\bar{\xi}_t^{i\T}]=\sum_{j=1}^{\ell_i}\mathsf{P}(\xi_t\in \p^i_j)\E[\xi_t|\xi_t\in \p^i_j]\E[\xi_t|\xi_t\in \p^i_j]\T.
   \end{align}
   Therefore, using the definition of $F_t(\theta_t)$, we may rewrite \eqref{E:mcal} as $\mathcal M_t(\theta_t) = \E[\xi_t\xi_t\T\mid \theta_t] - F_t(\theta_t)$, and furthermore, we also obtain $\E[{\mathcal{M}_t(\theta_t)}]=M_t - \E[F_t(\theta_t)]$.
   The linear dependence of $F_t(\theta_t)$ on $\theta_t$ will be useful in designing a linear-program for selecting the optimal quantizers, as shown later in the paper.
 
 At this point,  recall from Proposition \ref{Pr:gaussian} and the discussion thereafter that $\{\xi_t\}_{t\in \bbN_0}$ is a sequence of uncorrelated zero-mean Gaussian noises (hence $\xi_k,\xi_\ell$ are independent for $k\ne \ell$) and $\{\hat\xi_t\}_{t\in \bbN_0}$ is the corresponding sequence of the quantized version of $\{\xi_t\}_{t\in \bbN_0}$. 
 Therefore, $\xi_k$ and $\hat{\xi}_\ell$ are independent for all $k\ne \ell$. 
 Hence,
 \begin{align} \label{E:xi_expected}
 \begin{split}
 \E[\xi_k|\hat\mO_t] &= \begin{cases}
 \E[\xi_k|\hat{\xi}_k,\theta_k], ~~~~~~&\text{if   } \hat{\xi}_k \in \hat{\mO}_t,\\
 E[\xi_k],& \text{otherwise}.
 \end{cases}  \\
 &= \vartheta_{k,t}\bar{\xi}_k,
 \end{split}
 \end{align}
 where we have used the definitions of $\bar\xi_t$ and $\vartheta_{k,t}$ to compactly write $\E[\xi_k|\hat\mO_t]=\vartheta_{k,t}\E[\xi_k|\hat{\xi}_k,\theta_k]=\vartheta_{k,t}\bar{\xi}_k$. 
 From this observation, and using Lemma \ref{L:inno}, the expression of $\bar{X}_t$ is computed in the following lemma.

\begin{lm} \label{L:barx}
For any $t$, $\bar{X}_t=\E[X_t|\I^c_t]$ is given by,
\begin{align}\label{E:xtilde}
\bar{X}_t=A^t\mu_0+\sum_{k=0}^t\Psi(t,k)\vartheta_{k,t}\bar\xi_k+\sum_{k=0}^{t-1}A^{t-1-k}BU_k.
\end{align}
\end{lm}  
 
\begin{proof}
Notice that, from \eqref{E:x_bar_new} we have
\begin{align*}
\E[X_t|\I^c_t]=&A^t\mu_0+\sum_{k=0}^t\Psi(t,k)\vartheta_{k,t}\E[\xi_k|\hat{\mO}_t]+\sum_{k=0}^{t-1}A^{t-1-k}BU_k.
\end{align*}
The lemma follows immediately after we substitute the expression of $\E[\xi_k|\hat{\mO}_t]$ from \eqref{E:xi_expected} into the last equation.
\end{proof} 
 \vspace*{6 pt}
 
Define the error $e_t\triangleq X_t-\bar{X}_t $.
It follows from  \eqref{E:xtilde} that
\begin{align*}
    e_t = A^t X_0+\sum_{k=0}^{t-1} A^{t-k-1}W_{k} - A^t\mu_0 - \sum_{k=0}^t\Psi(t,k)\vartheta_{k,t}\bar\xi_k.
\end{align*}
Notice that $e_t$ does not depend on the control strategy $\gamma^\U$. 
However, it does depend on the quantizer selection strategy $\gamma^\Theta$ through the last term in the above equation.
Furthermore, for all $t$, $\E[e_t] =0 $  since $\E[\bar X_t] =\E[X_t] $ due to the law of total expectation. 

 
 

 At this point we are ready to return to the cost function \eqref{E:cost2} and find the optimal controller and the optimal quantizer selection policies.

Associated with the cost function \eqref{E:cost2}, let us define the value function as follows:
\begin{subequations}\label{E:preV}
\begin{align}
V_k(\I_k) = &\min_{\{\gamma^u_t\}_{t=k}^{T-1},\{\gamma^\theta_t\}_{t=k}^{T-1}}\E_\gamma\Big[\sum_{t=k}^{T-1}(X_t\T Q_1X_t+U_t\T RU_t+\theta_t\T \Lambda) +X_T\T Q_2X_T\Big],\\
V_T(\I_t)=&\E_\gamma[X_T\T Q_2X_T],
\end{align}
\end{subequations}
where the information set $\I_k = \{\I^c_k, \bar\I^q_k\}$ and $\E_\gamma[\cdot]$ denotes the expectation under the strategy pair $\gamma =(\gamma^\U,\gamma^\Theta)$.
Using the dynamic programming principle,
\begin{align} \label{E:V}
V_k(\I_k)=&\min_{\gamma^u_k\in \Gamma^u_k,\gamma^\theta_k\in \Gamma^\theta_k}\E_\gamma\Big[(X_k\T Q_1X_k+U_k\T RU_k+\theta_k\T \Lambda) +V_{k+1} \Big].
\end{align}

If $\gamma^{u*}_k$ and $\gamma^{\theta*}_k$ minimize the right-hand-side of \eqref{E:V}, then the optimal strategies are $U_k^*=\gamma^{u*}_k(\I^c_k)$ and $\theta_k^*=\gamma^{\theta *}_k(\bar\I^q_k)$.
From \eqref{E:preV}, we also have that
\begin{align} \label{E:exVal}
\min_{\gamma^\U\in \Gamma^\U,\gamma^\Theta\in \Gamma^\Theta}J(\gamma^\U,\gamma^\Theta)=\E_\gamma[{V_0}].
\end{align}
The following  theorem characterizes the optimal policy $\gamma^{u*}_k(\cdot)$ for all $k=0,1,\ldots,T-1$.
\begin{thm}[Optimal Control Policy] \label{T:optcont}
Given the information $\I^c_k$ to the controller at time $k$, the optimal control policy $\gamma^{u*}_k:\I^c_k\to\R^m$ that minimizes the right-hand-side of \eqref{E:V} has the following structure
\begin{align}
U^*_k=\gamma^{u*}_k(\I^c_k)=-L_k\bar{X}_k,
\end{align}
where $\bar{X}_k$ is computed in Lemma~\ref{L:barx}  for all $k =0,1,\ldots,T-1$, and the matrices $L_k$ and $P_k$ are obtained by
\begin{subequations}
\begin{align}
\begin{split}
L_k=(R+B\T P_{k+1}B)^{-1}B\T P_{k+1}A, \label{E:eqLk}
\end{split}\\
\begin{split}
P_k= Q_1+A\T P_{k+1}A-L_k\T (R+B\T P_{k+1}B)L_k, \label{E:eqPk}
\end{split}\\
\begin{split}
P_T=Q_2.
\end{split}
\end{align}
\end{subequations}
\end{thm}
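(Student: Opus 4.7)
The plan is to proceed by backward dynamic programming on the Bellman recursion \eqref{E:V}, with inductive hypothesis $V_k = \E[X_k\T P_k X_k] + c_k$, where $P_k$ satisfies the claimed Riccati recursion \eqref{E:eqPk} and $c_k$ is a scalar depending only on noise statistics and on optimally selected quantizers from time $k$ onward. The base case $k=T$ is immediate, with $P_T = Q_2$ and $c_T = 0$.

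For the inductive step, I would substitute $X_{k+1} = AX_k + BU_k + W_k$ into $\E[X_{k+1}\T P_{k+1} X_{k+1}]$. Because $W_k$ is zero-mean and independent of both $X_k$ and of the $\I^c_k$-measurable $U_k$, all cross terms vanish and yield an additive $\tr(P_{k+1}\W)$. The next key step is to decompose $X_k = \bar{X}_k + e_k$ with $\bar{X}_k = \E[X_k\mid \I^c_k]$: Lemma \ref{L:ortho} implies $e_k$ is orthogonal to every $\I^c_k$-measurable random variable, so expectations of cross terms $e_k\T M\bar{X}_k$ and $e_k\T M U_k$ vanish for any conformable $M$. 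This splits $\E[X_k\T Q_1 X_k]$ and $\E[X_k\T A\T P_{k+1} A X_k]$ cleanly into $\bar{X}_k$-parts and $e_k$-parts.

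Collecting the $U_k$-dependent portion of the integrand leaves $U_k\T (R + B\T P_{k+1} B) U_k + 2 U_k\T B\T P_{k+1} A \bar{X}_k$. Completing the square over $\I^c_k$-measurable controls yields the unique minimizer $U_k^* = -L_k \bar{X}_k$ with $L_k$ as in \eqref{E:eqLk}. The residual $\bar{X}_k$-quadratic is $\bar{X}_k\T \bigl(Q_1 + A\T P_{k+1} A - L_k\T (R + B\T P_{k+1} B) L_k\bigr)\bar{X}_k = \bar{X}_k\T P_k \bar{X}_k$, matching \eqref{E:eqPk}. Using orthogonality one more time, $\E[\bar{X}_k\T P_k \bar{X}_k] = \E[X_k\T P_k X_k] - \E[e_k\T P_k e_k]$, so the induction closes: the error-covariance pieces and the quantizer cost $\theta_k\T \Lambda$ get absorbed into an updated $c_k$, keeping the quadratic structure of the value function intact.

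The hardest part will be ensuring that the residual is genuinely free of $U_k$ after optimization, since $\bar{X}_k$ depends affinely on past controls through \eqref{E:xtilde}. The argument is rescued by the observation that the innovations $\xi_k$ and the estimation error $e_k$ are control-free (Proposition \ref{Pr:gaussian} together with the structure of $\Delta_t$ in \eqref{E:errord} and Lemma \ref{L:delta}), so $\bar{X}_k$ is a legitimate $\I^c_k$-measurable object at stage $k$ and the optimization over $U_k$ decouples cleanly from the quantizer selection: the latter only enters through the error-covariance weighting $e_k\T (Q_1 + A\T P_{k+1} A - P_k) e_k = e_k\T L_k\T (R + B\T P_{k+1} B) L_k e_k$ and the explicit term $\theta_k\T \Lambda$, both of which are $U_k$-independent. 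This is precisely where certainty-equivalence enters, allowing the optimal control gain $L_k$ to be computed offline and to be oblivious to the quantizer parameters.
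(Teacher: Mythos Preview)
Your proposal is correct and follows essentially the same backward dynamic-programming argument as the paper, with the same inductive hypothesis $V_k=\E[X_k\T P_kX_k]+\text{(control-free remainder)}$ and the same reliance on the control-independence of $e_k$ (via Proposition~\ref{Pr:gaussian} and \eqref{E:errord}) to justify that the residual does not couple back to $U_k$. The only cosmetic difference is the order of operations: the paper first completes the square in $X_k$ to obtain $\|U_k+L_kX_k\|^2_{(R+B\T P_{k+1}B)}$ and then invokes Lemma~\ref{L:ortho} to project onto $\I^c_k$, whereas you first split $X_k=\bar X_k+e_k$ via orthogonality and then complete the square directly in $\bar X_k$; the two routes produce identical terms.
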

\vspace{10 pt}

\begin{proof}
The proof of this theorem is based on the dynamic programming principle.
Specifically, 
if there exist value functions $V_k$ for all $k=0,1,\ldots,T$ that satisfy \eqref{E:V}, then the optimal control $U_k^*$ and the optimal quantizer selection $\theta_k^*$ are obtained by the policies $\gamma^{u*}_k$ and $\gamma^{\theta*}_k$ that minimize \eqref{E:V}. 

Let us assume that the value function at time $k =0,1,\ldots,T-1$ is of the form:
\begin{align} \label{E:hypoV}
V_k(\I_k)=\E_\gamma[X_k\T P_k X_k]+C_k+r_k,
\end{align}
where $P_k$ is as  in \eqref{E:eqPk}, and, for all $k=0,1,\ldots,T-1$,
\begin{align} \label{E:ck_old}
C_k=\min_{\{\gamma^\theta_t\}_{t=k}^{T-1}}\E_{\gamma^\theta}\left[\sum_{t=k}^{T-1}e_t\T N_t e_t+\theta_t\T \Lambda\right],
\end{align}
 where $N_k \in \R^{n\times n}$ and $r_k\in \R$ are given by
 \begin{subequations}
\begin{align}
N_k=&L_k\T (R+B\T P_{k+1}B)L_k, \label{E:eqNk} \\ 
r_k=&r_{k+1}+\tr(P_{k+1}\W), \label{E:eqrk} \\
r_T=&0.
\end{align}
\end{subequations}

Equation \eqref{E:ck_old} can be re-written as
\begin{align*}
C_{k}=\min_{\gamma^\theta_k}\E_{\gamma^{\theta}}\left[e_k\T N_k e_k+\theta_k\T \Lambda+C_{k+1} \right].
\end{align*}

We first verify that $V_{T-1}$ is of the form \eqref{E:hypoV}.
\begin{align} \label{E:eqVT-1}
V_{T-1}=\min_{\gamma^u_{T-1},\gamma^\theta_{T-1}}\E_\gamma\Big[&X_{T-1}\T Q_1X_{T-1} +U_{T-1}\T RU_{T-1}+\theta_{T-1}\T \Lambda+X_T\T P_TX_T \Big].
\end{align}
Substituting the equation $X_T=AX_{T-1}+BU_{T-1}+W_{T-1}$, and after some simplifications, yields
\begin{align*}
V_{T-1}=\min_{\gamma^u_{T-1},\gamma^\theta_{T-1}}\E_\gamma \Big[\|U_{T-1}&+L_{T-1}X_{T-1}\|^2_{(R+B\T P_TB)}+X_{T-1}\T P_{T-1}X_{T-1} \\
&+\theta_{T-1}\T \Lambda+\tr(P_T\W) \Big],
\end{align*}
where $\|L\|^2_K \triangleq L\T K L$ for any two matrices $L$ and $K$ of compatible dimensions.
In the previous expression, $\|U_{T-1}+L_{T-1}X_{T-1}\|^2_{(R+B\T P_TB)}$ is the only term that depends on $U_{T-1}$. 
Therefore, we seek $\gamma^u_{T-1}:\I^c_{T-1}\to \R^m$  that minimizes the mean-square error $\E\left[\|U_{T-1}+L_{T-1}X_{T-1}\|^2_{(R+B\T P_TB)}\right]$. 
Thus, the optimal $U_{T-1}$ is a minimum mean squared estimate of $-L_{T-1}X_{T-1}$ based on the $\sigma$-field generated by $\I^c_{T-1}$. 
Hence, from Lemma~\ref{L:ortho},
\begin{align}
U_{T-1}^*&=\gamma^{u*}_{T-1}(\I^c_{T-1})=-L_{T-1}\E[X_{T-1}|\I^c_{T-1}] 
=-L_{T-1}\bar X_{T-1}.
\end{align} 
After substituting the optimal $U^*_{T-1}$ in \eqref{E:eqVT-1}, we obtain
\begin{align*}
V_{T-1}=\min_{\gamma^\theta_{T-1}}\E_\gamma\Big[&\|X_{T-1}-\bar X_{T-1}\|^2_{N_{T-1}}+\theta_{T-1}\T \Lambda+\tr(P_T\W)+X_{T-1}\T P_{T-1}X_{T-1}\Big].
\end{align*}
The above expression of $V_{T-1}$ can be rewritten as follows
\begin{align*}
V_{T-1}=&\min_{\gamma^\theta_{T-1}}\E_{\gamma^\theta}\left[e_{T-1}\T N_{T-1}e_{T-1}+\theta_{T-1}\T \Lambda\right]+\E[X_{T-1}\T P_{T-1}X_{T-1}]+\tr(P_T\W).
\end{align*}
Therefore, using the definitions of $C_{T-1}$ and $r_{T-1}$ from \eqref{E:ck_old} and \eqref{E:eqrk}, we obtain
\begin{align*}
V_{T-1}=\E[X_{T-1}\T P_{T-1}X_{T-1}]+C_{T-1}+r_{T-1}.
\end{align*}
Thus, $V_{T-1}$ is of the form \eqref{E:hypoV}. 
Next, we prove the hypothesis \eqref{E:hypoV} using mathematical induction.
To that end, we now assume that \eqref{E:hypoV} is true for some $k+1$. 
Then,
\begin{align*}
V_k=\min_{\gamma^u_k,\gamma^\theta_k}\E_\gamma\Big[&(X_k\T Q_1X_k+U_k\T RU_k+\theta_k\T \Lambda) \nonumber+V_{k+1}~\big]\\
=\min_{\gamma^u_k,\gamma^\theta_k}\E_\gamma\Big[&(X_k\T Q_1X_k+U_k\T RU_k+\theta_k\T \Lambda) \nonumber+X_{k+1}\T P_{k+1}X_{k+1}+r_{k+1}+C_{k+1}~\big].
\end{align*} 

Using \eqref{E:dyn}, and after some simplifications, it follows that
\begin{align} \label{E:eqVk}
V_k=\min_{\gamma^u_{k},\gamma^\theta_{k}}\E_\gamma\Big[\|U_{k}&+L_{k}X_{k}\|^2_{(R+B\T P_{k+1}B)}+X_{k}\T P_{k}X_{k} +\theta_{k}\T \Lambda\\ \nonumber
& +\tr(P_{k+1}\W)+r_{k+1}+C_{k+1} \Big].
\end{align}
One may notice from the definition of $e_k$ that it does not depend on the past control history $\U_k$, but rather, it depends on the quantizer selection history $\Theta_k$. 
Thus, $C_k$ does not depend on the control history $\U_k$.
Furthermore, from \eqref{E:eqNk}, \eqref{E:eqrk} and \eqref{E:eqPk}, one notices that  $N_k$, $r_k$ and $P_k$ 
do not depend on the past (or future) decisions on the control or quantizer-selection.
Therefore, $\|U_{k}+L_{k}X_{k}\|^2_{(R+B\T P_{k+1}B)}$ is the only term in the above expression of $V_k$ that depends on $U_k$.
Using Lemma \ref{L:ortho}, the optimal $\I_k^c$-measurable control $U_k^*$ that minimizes  $\E\left[ \|U_{k}+L_{k}X_{k}\|^2_{(R+B\T P_{k+1}B)}\right]$ is given by
\begin{align} \label{E:u*}
U_k^*=\gamma^{u*}_k(\I_k^c)=-L_k\E\left[X_k|\I^c_k\right]=-L_k\bar X_k.
\end{align} 
After substituting the optimal control in \eqref{E:eqVk}, we obtain
\begin{align*}
V_k=&\E[X_k\T P_k X_k]+\min_{\gamma^\theta_k}\E_\gamma\Big[e_k\T (L_k\T (R+B\T P_{k+1}B)L_k)e_k+\theta_k\T \Lambda+C_{k+1}\Big]\\
&+\tr(P_{k+1}\W)+r_{k+1}\\
=&\E[X_k\T P_k X_k]+\min_{\gamma^\theta_k}\E_{\gamma^\theta}\Big[e_k\T N_k e_k+\theta_k\T \Lambda+C_{k+1}\Big]+r_k\\
=&\E[X_k\T P_k X_k]+C_k+r_k.
\end{align*}
Thus, the value function is indeed of the form \eqref{E:hypoV}, and hence, the optimal control at time $k=0,1,\cdots,T-1$ is given by \eqref{E:u*}.
This completes the proof.
\end{proof}

\begin{rem}
From Theorem \ref{T:optcont}, the optimal control is linear in  $\bar{X}_k$. 
The optimal gain  $-L_k$ can be computed offline without knowledge of $\gamma^{\Theta*}$. 
The effect of $\gamma^{\Theta*}$ on $\gamma^{\U*}$ is through the term $\bar{X}_k$, which can be computed online using \eqref{E:xtilde}.
\end{rem}

Having computed the optimal controller, we now focus on solving for the optimal selection of the quantizers. 
To that end, from \eqref{E:hypoV}, we have
\begin{align*}
V_0=\E[X_0\T P_0X_0]+C_0+r_0,
\end{align*}
and thus,
\begin{align*}
\min_{\gamma^\U\in \Gamma^\U,\gamma^\Theta\in \Gamma^\Theta}J(\gamma^\U,\gamma^\Theta)= \E[{V_0}]=\mu_0\T P_0\mu_0 +\tr(P_0\Sigma_x)+r_0+C_0,
\end{align*}
where,  from \eqref{E:ck_old}, $C_0$ can be written as
\begin{align} \label{E:ck}
C_0=&\min_{\{\gamma^\theta_t\}_{t=0}^{T-1}}\E_{\gamma^\theta}\left[\sum_{t=0}^{T-1}e_t\T N_te_t+\theta_t\T \Lambda\right].
\end{align}
Notice that the effect of the quantizer-selection policy $\gamma^\Theta$ on the cost $J(\gamma^\U,\gamma^\Theta)$ is reflected only through the term $C_0$. 
The optimal quantizer selection policy can thus be found by performing the minimization associated with $C_0$ as represented in \eqref{E:ck}.

\subsection{Optimal Quantizer Selection Policy} \label{sec:optimal_partial_infor_quantizer_selection}

In this section, we study the optimal quantizer-selection policy $\gamma^{\Theta*}$, which can be found by solving \eqref{E:ck}.
We may write $\E[e_t\T N_t e_t]=  \tr(N_t\E[e_t e_t\T])$, and the following Lemma computes $\E[e_te_t\T]$.

\begin{lm} \label{L:et_second_moment}
For all $t\in \bbN_0$,
\begin{align*}
    \E[e_te_t\T] =  \Sigma_t\! +\!\! \sum_{k=0}^t \!\Psi(t,k) (M_k -\E[\vartheta_{k,t}F_k(\theta_k)])\Psi(t,k)\T.
\end{align*}
\end{lm}
 \begin{proof}
 The proof is given in Appendix~\ref{A:delta}.
 \end{proof}
 \vspace*{6 pt}
 
Using Lemma~\ref{L:et_second_moment}, 
%
%
%
%
Therefore, the cost $C_0$ can be simplified as
\begin{align} \label{E:MINP}
&C_0=\sum_{t=0}^{T-1}\left(\tr(\Sigma_tN_t)+\sum_{k=0}^t\tr(\tilde N_{k,t}M_k)\right)+\min_{\{\gamma^\theta_t\}_{t=0}^{T-1}}\E_{\gamma^\theta}\left[\sum_{t=0}^{T-1}\tr\left(\Pi_t(\Theta)F_t(\theta_t)\right)+\theta_t\T\lambda\right],
\end{align}
where
\begin{subequations}
\begin{align}
    \tilde N_{k,t}=\Psi(t,k)\T N_t \Psi(t,k), \label{E:ntilde} \\
\Pi_t(\Theta)=-\sum_{\ell=t}^{T-1}\vartheta_{t,\ell}\tilde{N}_{t,\ell}. \label{E:piup}
\end{align}
\end{subequations}
The optimal quantizer selection policy is found by solving the  Mixed-Integer-Nonlinear Program (MINP)  in \eqref{E:MINP}.
 
At this point it may appear that the expression $\sum_{t=0}^{T-1}\tr(\Pi_t(\Theta)F_t(\theta_t))$ in \eqref{E:MINP} is a nonlinear function of $\Theta$. 
However, we now show that after some simplifications, it can be written as a linear function of $\Theta$. 
By expressing \eqref{E:MINP}    as a linear function of $\Theta$, we can recast \eqref{E:MINP} as a Mixed-Integer-Linear-Program (MILP), which further can be solved efficiently using existing efficient solvers \cite{lofberg2004yalmip}.

To express \eqref{E:MINP} as an MILP, we construct a matrix $\Phi\in \R^{T\times M}$ as follows: for all $i=0,\ldots,T-1$ and $j=1,\ldots,M$, let
\begin{align} \label{E:phi}
[\Phi]_{ij}=\begin{cases}
1, & ~~{\textrm{  if }} i \ge d_j,\\
0, &~~\textrm{otherwise},
\end{cases}
\end{align}
where $[\Phi]_{ij}$ is the $ij$-th component of $\Phi$ matrix. It directly follows from the definition of $\Phi$ that $1_{d_j\le t-k}=[\Phi]_{t-k,j}$. 
Consequently, we can express  \eqref{E:vartheta} as
\begin{align*}
\vartheta_{k,t}=\sum_{i=1}^M\theta_k^i[\Phi]_{t-k,i}.
\end{align*}

Thus, $\Pi_t(\Theta)$ in \eqref{E:piup} can be rewritten  as $\Pi_t(\Theta)=-\sum_{\ell=t}^{T-1}\sum_{i=1}^M\theta_t^i[\Phi]_{\ell-t,i}\tilde{N}_{t,\ell}$. 
Also, from \eqref{E:F}, we have that $F_t(\theta_t)=\sum_{i=1}^M\theta_t^iF^i_t$. Thus,
\begin{align*}
\tr(&\Pi_t(\Theta)F_t(\theta_t))=-\tr\left(\sum_{i=1}^M(\theta_t^i\sum_{\ell=t}^{T-1}[\Phi]_{\ell-t,i}\tilde{N}_{t,\ell})F_t(\theta_t)\right)\\
=&-\tr\left(\left(\sum_{i=1}^M\bigg(\theta_t^i\sum_{\ell=t}^{T-1}[\Phi]_{\ell-t,i}\tilde{N}_{t,\ell}\bigg)\right)\left(\sum_{j=1}^M\theta_t^jF^j_t\right)\right)\\
\stackrel{(a)}{=}&-\tr\left(\sum_{i=1}^M\theta_t^i\bigg(\sum_{\ell=t}^{T-1}[\Phi]_{\ell-t,i}\tilde{N}_{t,\ell}\bigg)F^i_t\right)=-\sum_{i=1}^M\theta_t^i\beta_t^i,
\end{align*}
where $\beta_t^i=\tr\left(\left(\sum_{\ell=t}^{T-1}[\Phi]_{\ell-t,i}\tilde{N}_{t,\ell}\right)F^i_t\right)$ and $(a)$ follows from the fact $\theta_t^i\theta_t^j=0$ if $i\ne j$.
Note that the coefficients $\beta_t^i$ can be computed offline.

From the previous derivation, $C_0$ in \eqref{E:MINP} becomes 
\begin{align} \label{E:C_MILP}
&C_0=c+\min_{\{\gamma^\theta_t\}_{t=0}^{T-1}}\E_{\gamma^\theta}\left[\sum_{t=0}^{T-1} c_t\T \theta_t\right],
\end{align}
where constant $c=\sum_{t=0}^{T-1}\left(\tr(\Sigma_tN_t)+\sum_{k=0}^t\tr(\tilde N_{k,t}M_k)\right)$ and $c_t=[c_t^1,\ldots,c_t^M]\T$ with $c_t^i= \lambda_i-\beta^i_t$.
Notice that, in \eqref{E:C_MILP}, the cost function is linear in $\theta$ and the coefficients $c^i_t$ are deterministic (and can be computed offline). 
Therefore, it is sufficient to look for a deterministic strategy to minimize the linear cost $\sum_{t=0}^{T-1} c_t\T \theta_t$, as the class of deterministic strategies contains an optimal solution for $\min_{\{\gamma^\theta_t\}_{t=0}^{T-1}}\E_{\gamma^\theta}\left[\sum_{t=0}^{T-1} c_t\T \theta_t\right]$. 
The following lemma presents an MILP formulation to obtain the optimal quantizer selection policy.
\medskip

\begin{lm}
The optimal quantizer selection strategy is found by solving the following Mixed-Integer-Linear-Program
\begin{subequations}\label{E:MILP}
\begin{align}
\min_{\Theta} &\sum_{t=0}^{T-1}c_t\T\theta_t,\\
\rm{s.t.  } ~~~~~~&\theta_t^i \in \{0,1\}, ~~~~~~ t=0,\ldots, T-1, ~~~i=1,\ldots,M,\\
&\sum_{i=1}^M\theta_t^i=1, ~~~~~~ t=0,\ldots, T-1.
\end{align}
\end{subequations}
\end{lm}

\begin{proof}
The proof follows directly from the derivation of \eqref{E:C_MILP} and the subsequent discussion.
\end{proof}
Notice that in \eqref{E:MILP} there is no constraint coupling $\theta_k$ and $\theta_\ell$, and the cost function in \eqref{E:MILP} is also decoupled in $\theta_k$ and $\theta_\ell$ for all $k\ne \ell \in \{0,\ldots,T-1\}$.
 Therefore, the optimal $\theta_t$ at time $t$ can be found by minimizing $c_t\T\theta_t$ subject to the constraints $\sum_{i=1}^M\theta_t^i=1$, $\theta^i_t \in \{0,1\}$.
  Thus, the optimal quantizer selection strategy for this problem turns out to be remarkably simple: if $i^*=\stackrel[i=1,\ldots,M]{}{\arg\min}\{c^1_t,\ldots,c^M_t\}$, then the optimal strategy is to use the $i^*$-th quantizer\footnote{In case there exists multiple minimizers for $\stackrel[i=1,\ldots,M]{}{\arg\min}\{c^1_t,\ldots,c^M_t\}$, one of these minimizers can be chosen randomly without affecting the optimality.} such that
\begin{align*}
\gamma^{\theta*}_t=\theta_t^*=[1_{i^*=1},\ldots,1_{i^*=M}]\T.
\end{align*}
This result is summarized in the following theorem.
\medskip

\begin{thm}[Optimal Quantizer Selection]\label{T:optquant}
At time $t$, the $j$-th quantizer is optimal  if and only if
\begin{align*}
c^{j}_t = \min\{c^1_t,\ldots,c^M_t\},
\end{align*}
where, for all $i=1,\ldots,M$,
\begin{align*}
c^i_t=\lambda_i-\tr\left(\left(\sum_{\ell=t}^{T-1}[\Phi]_{\ell-t,i}\tilde{N}_{t,\ell}\right)F^i_t\right).
\end{align*}
and $\tilde{N}_{t,\ell},[\Phi]_{\ell-t,i}$ and $F^i_t$ are defined in equation \eqref{E:ntilde}, \eqref{E:phi} and \eqref{E:Fi} respectively.
\end{thm}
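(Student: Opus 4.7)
The plan is to derive the theorem as an immediate corollary of the Mixed-Integer-Linear-Program in \eqref{E:MILP}. The substantive work has already been completed in transforming the MINP \eqref{E:MINP} into linear form, using both the indicator identity $\vartheta_{k,t}=\sum_i\theta_k^i[\Phi]_{t-k,i}$ and the mutual exclusivity $\theta_t^i\theta_t^j=0$ for $i\ne j$; given this, the closed-form solution drops out from standard linear-programming reasoning over a finite discrete feasible set.

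First I would observe that the MILP objective $\sum_{t=0}^{T-1}c_t\T\theta_t$ is separable across $t$, and that the feasibility constraints $\theta_t^i\in\{0,1\}$ together with $\sum_{i=1}^M\theta_t^i=1$ involve only the variables at a single time index; there is no constraint coupling $\theta_k$ and $\theta_\ell$ for $k\ne\ell$. Hence the joint minimization over $\Theta$ decouples into $T$ independent sub-problems, one for each $t$, each of the form $\min_{\theta_t}\,c_t\T\theta_t$ subject to the single-time constraints.

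Second, for each fixed $t$, the feasible set is precisely the collection of the $M$ standard basis vectors $\{e_1,\ldots,e_M\}\subset\R^M$. Evaluating the linear objective on these vertices gives $c_t\T e_i = c_t^i$, so any minimizer $\theta_t^*$ must put its single nonzero entry at an index $i^*\in\arg\min_{i}c_t^i$. Equivalently, the $j$-th quantizer is optimal if and only if $c_t^j=\min_i c_t^i$, which is the ``if and only if'' characterization stated in the theorem. In the event of ties, any tying index yields the same optimal value, so uniqueness is not required.

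Third, I would simply insert the explicit form $c_t^i=\lambda_i-\beta_t^i$ with $\beta_t^i=\tr\bigl(\bigl(\sum_{\ell=t}^{T-1}[\Phi]_{\ell-t,i}\tilde N_{t,\ell}\bigr)F_t^i\bigr)$ obtained immediately before \eqref{E:MILP} to match the displayed formula for $c_t^i$ in the theorem. There is no remaining technical obstacle at this stage: the real difficulty was the pre-MILP linearization step, which exploited $\theta_t^i\theta_t^j=0$ to collapse the quadratic-looking term $\tr(\Pi_t(\Theta)F_t(\theta_t))$ into the linear expression $-\sum_i\theta_t^i\beta_t^i$. Once that identity is in hand, the theorem is a one-line consequence of minimizing a linear functional over the vertex set of the probability simplex at each time instant, confirming that the optimal quantizer-selection policy is both offline-computable and pointwise-greedy in $t$.
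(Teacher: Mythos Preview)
Your proposal is correct and follows essentially the same route as the paper: the paper likewise notes that the MILP objective and constraints are decoupled across time, reduces to minimizing $c_t^\top\theta_t$ subject to $\sum_i\theta_t^i=1$ at each $t$, and concludes that the optimal index is $i^\ast=\arg\min_i c_t^i$, after which Theorem~\ref{T:optquant} is presented explicitly as a summary of that discussion.
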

The following remark is immediate from Theorem \ref{T:optquant}.

\medskip

\begin{rem}
  The optimal strategy for selecting the quantizers can be computed offline. This requires an offline computation of $\tilde{N}_{t,\ell}$ and $F^i_t$, but it does not require knowledge of the optimal control strategy. 
  \end{rem}


\subsection{Discussion and Remarks}

We delve into the cost $c_t\T \theta_t$ in \eqref{E:MILP} to discuss how the three factors,  namely, the cost of quantization, the quantization resolution, and the delay, affect the cost function.
The coefficients $c^i_t$ which determine the optimal quantizer selection strategy at time $t$ have two components, namely, $\lambda_i$, and $\beta^i_t$, where $\lambda_i$ is the cost for using the $i$-th quantizer, and $\beta^i_t$ captures the trade-off between quantization quality and the associated delays. 
Let us discuss each of these two terms in greater detail. 
First, $c^i_t$ being proportional to the cost $\lambda_i$,  reflects the fact that lower quantization cost is desirable.
 The quantity $\beta^i_t$ is arguably more interesting. 
 Note that $\beta^i_t$ is of the form $\tr(G^i_tF^i_t)$, where for all $i$, $G^i_t$ is a positive (semi)-definite matrix whose expression can be easily identified from the expression of $\beta^i_t$. 
 Moreover, since  $1\ge [\Phi]_{i,1}\ge [\Phi]_{i,2} \ge \ldots \ge [\Phi]_{i,M} \ge 0$ for all $i=0,\ldots,T-1$, we have $G^1_t\succeq G^2_t\succeq \ldots \succeq G^M_t$.
  On the other hand, by using the $i$-th quantizer, the reduction in \textit{uncertainty covariance} is $F^i_t$.
   By \textit{uncertainty covariance} we mean the following: Before the arrival of any measurement ($\hat \xi_t$), $\xi_t$ is a Gaussian distributed random variable with covariance $M_t$.
   Once a quantized version ($\hat{\xi}_t$) of $\xi_t$ arrives at the controller, the controller receives information on the realization of the random variable $\xi_t$. 
   Specifically, at this point, the controller knows in which of the $\p^i_j \subset \R^p$ the random variable $\xi_t$ belongs to. 
   Therefore, the posterior distribution of $\xi_t$ changes after receiving $\hat{\xi}_t$, and the difference between the covariance of this posterior distribution and the prior distribution is $F^i_t$ if the $i$-th quantizer is used. 
   Needless to say,  had there been a quantizer which could ensure $\hat{\xi}_t=\xi_t$, i.e., no loss during quantization for every realization of $\xi_t$, then the reduction in covariance is exactly $M_t$ and the posterior distribution of $\xi_t$ at the controller is a Dirac measure around $\hat{\xi}_t$. 
   Use of quantized measurements is similar as operating somewhere in between open-loop and closed-loop control. 
   In open-loop, no measurement is sent, and in closed-loop, the exact measurement is sent without any distortion. 
   By means of  quantization, the controller receives \textit{something} but not \textit{everything}. 
Furthermore,   since $\beta^i_t \ge 0$ and since it appears with a negative sign in the cost function,  it is clearly desirable to choose a quantizer that would maximize $\beta^i_t$. 
  The matrix $F^i_t$ directly reflects how much reduction in covariance will occur if the $i$-th quantizer is used. 
  The matrix $G^i_t$, on the other hand, incorporates the delay associated with the $i$-th quantizer. 
  As $i$ is increased from $1$ to $M$, $G^i_t$ decreases $\tr(G^i_tF^i_t)$, reflecting the fact that smaller delay is preferable.
  However, as $i$ is varied, $F^i_t$ shows the variation in covariance reduction. For example, if  reduction  in covariance increases with the increase in $\ell_i$, then $F^i_t$ is attempting to increase $\tr(G^i_tF^i_t)$ as $i$ is varied from $1$ to $M$.
  Thus, there is a dual behavior between $F^i_t$ and $G^i_t$ as $i$ changes, and this duality is captured by the parameters of the channel and the quantizers, namely, $\p^i$, $\ell^i$, and the delay $d_i$.

  We conclude this section with a few more remarks.

  \begin{rem}
  The cost function in \eqref{E:MILP} resembles the component $\sum_{t=0}^{T-1} \Lambda\T \theta_t$ in \eqref{E:cost2}, except that all the state and control costs are absorbed in the coefficients $c^i_t$. Here $c^i_t$ can be viewed as the \textit{adjusted cost} for operating the  $i$-th quantizer at time $t$, and the adjustment factor is $\beta_t^i$ which can be computed offline.
  \end{rem}
  
\begin{rem} \label{R:openloop}
This approach allows for the case when the set of available quantizers contains a quantizer $\Q^0$ with only one quantization level, i.e., $\ell_0=1$, $\p^0=\{\p^0_{1}=\R^p\}$, and quantization cost $\lambda_0=0$. This quantizer produces the same quantized output for every input signal, hence providing the option to  remain open-loop. For such a quantizer, it can be verified from \eqref{E:Fi} that $F^0_t=0$ for all $t$. Therefore, $c^0_t=\lambda_0-\beta^0_t=0$ for all $t$, and selection of this quantizer at any time $t$ reflects the fact that it is optimal not to send any information to the controller at that time.
If the quantization costs are very high \footnote{Or, the quantization cost is higher than the reward from using quantization, i.e. $\lambda_i > \beta^i_t$ for all $t$.} $\lambda_i \gg 1$,  the optimal choice of the quantizers would be $\Q^0$, and hence, the controller will not be receiving any information, which in principle, is equivalent to open-loop control.
\end{rem}

  \section{Special Cases} \label{S:special}
In this section we consider two special cases, namely: (i) constant-delay case,  and (ii) full observation case.  
  
  \subsection{Constant-Delay}
  
  In this section we consider the case where $d_1=d_2=\ldots=d_M=d$, i.e., the delay induced by each quantizer is the same.
   Intuitively, since the delay is not affected by the choice of the quantizer, then the quantizer selection problem should reduce to a trade-off between the quantization cost and the quality of quantization. To see this, let us first note that $[\Phi]_{i,1}=\ldots=[\Phi]_{i,M}=1_{i\ge d}$ for all $i=0,\ldots,T-1$. Therefore,
  \begin{align*}
  \beta^i_t=&\tr\left(\left(\sum_{\ell=t}^{T-1}[\Phi]_{\ell-t,i}\tilde{N}_{t,\ell}\right)F^i_t\right)\\
  =&\tr\left(\left(\sum_{\ell=t}^{T-1}1_{\ell-t\ge d}\tilde{N}_{t,\ell}\right)F^i_t\right)
  =\tr\left(\left(\sum_{\ell=t+d}^{T-1}\tilde{N}_{t,\ell}\right)F^i_t\right)\\
  =&\tr\left(H(t,d)F^i_t\right),
  \end{align*}
  where $H(t,d)=\sum_{\ell=t+d}^{T-1}\tilde{N}_{t,\ell} \succeq 0$. 
  Thus, for fixed $t$ and $d$, whether the $i$-th quantizer is optimal at time $t$ is entirely determined by $F^i_t$ where recall that $F^i_t$ represents the \textit{uncertainty covariance} reductions.
  
  Also notice that $H(t,d)=0$ for all $t\ge T-d $, and hence $\beta^i_t=0$. 
  Therefore, the optimal selection for the quantizers for $t\ge T-d$ would be the one with the lowest $\lambda_i$.
   This is due to the fact that the quantized information $\xi_{T-d},\xi_{T-d+1},\ldots$ will not be available at the controller before time $T-1$, and hence these quantized measurements would be of no use to the controller. 
   Therefore, the quality of the quantization for time $T-d$ onward is immaterial to the controller, and hence, the lowest cost quantizer would be the optimal.
  
  \subsection{Full Observation}
  
  For the full observation case we substitute $\V=0$ and $C=I$ in the analysis presented above. As a direct consequence, one can verify that $\xi_t=W_{t-1}$  for all $t$.
Therefore, $\{\xi_t\sim \N(0,\W)\}_{t\in\bbN_0}$ are i.i.d signals, and consequently  the matrices $F^i_t$ given in \eqref{E:F} will be time invariant, i.e., $F^i_1=\ldots=F^i_{T}\triangleq F^i$.

  For all $t\in \bbN_0$, $\Sigma_t=0$, $\Sigma_{t+1|t}=M_{t+1}=\W$. This also implies that, for all $t \ge k$,
  \begin{align*}
  \Psi(t,k)=A^{t-k}, \text{ and }
  \tilde{N}_{k,t}={A^{t-k}}\T N_tA^{t-k}.
  \end{align*}
  Therefore, the state estimate can be written as
  \begin{align} \label{E:estimate_cosnt}
  \tilde{X}_t=&A^t\mu_0+\sum_{k=0}^t\Psi(t,k)\vartheta_{k,t}\bar{\xi}_k+\sum_{k=0}^{t-1}A^{t-1-k}BU_k\nonumber \\
  =&A^t\mu_0+\sum_{k=0}^tA^{t-k}\vartheta_{k,t}\bar{\xi}_k+\sum_{k=0}^{t-1}A^{t-1-k}BU_k\nonumber\\
  =&A\tilde{X}_{t-1}+BU_{t-1}+\vartheta_{t,t}\bar{\xi}_t+\sum_{k=0}^{t-1}A^{t-k}(\vartheta_{k,t}-\vartheta_{k,t-1})\bar{\xi}_k.
  \end{align}
   
   The expression for $\beta^i_t$ is now given by:
   \begin{align*}
   \beta^i_t=&\tr\left(\left(\sum_{\ell=t}^{T-1}[\Phi]_{\ell-t,i}\tilde{N}_{t,\ell}\right)F^i_t\right)\\
   =&\tr\left(\left(\sum_{\ell=t+d_i}^{T-1}{A^{\ell-t}}\T N_\ell A^{\ell-t}\right)F^i\right).
   \end{align*}
   
Let us define a symmetric matrix $\Upsilon_t$ as follows
\begin{align*}
\Upsilon_{t}=&A\T \Upsilon_{t+1}A+N_t,\\
\Upsilon_T=&0,
\end{align*}   
which allows us to rewrite $\beta^i_t=\tr(\Upsilon_{\min\{t+d_i,T\}}F^i)$.
We conclude this section by discussing the constant delay case for fully observed systems.

Under the assumption of constant delay, i.e., $d_1=\ldots=d_M=d$, we obtain $\beta^i_t=\tr(\Upsilon_{\min\{t+d,T\}}F^i)$.
Furthermore,  $\vartheta_{k,t}=1$ if and only if $t-k\ge d$, otherwise $\vartheta_{k,t}=0$. This implies from \eqref{E:estimate_cosnt} that, for all $t\in \mathbb N_0$,
\begin{align}
\bar{X}_t=\begin{cases} A\bar{X}_{t-1}+BU_{t-1}+A^d\bar{\xi}_{t-d}, ~~~& \textrm{if  } t\ge d,\\
A\bar{X}_{t-1}+BU_{t-1}, & \textrm{otherwise}.
\end{cases}
\end{align}

\section{Numerical Examples}
In this section, we illustrate our theory on the following system
\begin{subequations}
\begin{align}
    X_{t+1}&=\begin{bmatrix}
1.01 &0.5\\ 0 &1.1
\end{bmatrix}X_t+\begin{bmatrix}
0.1 &0\\ 0 &0.15
\end{bmatrix}U_t+W_t,\\
Y_{t}&=\begin{bmatrix}
1 &0\\ 1 &1
\end{bmatrix}X_t+\nu_t,
\end{align}
\end{subequations}
where $X_0\sim \N(0,I)$, $W_t \sim \N(0,\frac{1}{2}I)$, and $\nu_t \sim \N(0,\frac{1}{4}I)$.
The control cost has  parameters $Q=Q_f=R=\frac{1}{2}I$, and the time horizon was set to $T=50$.
\begin{figure}
    \centering
    \includegraphics[draft=false, width=0.7\linewidth]{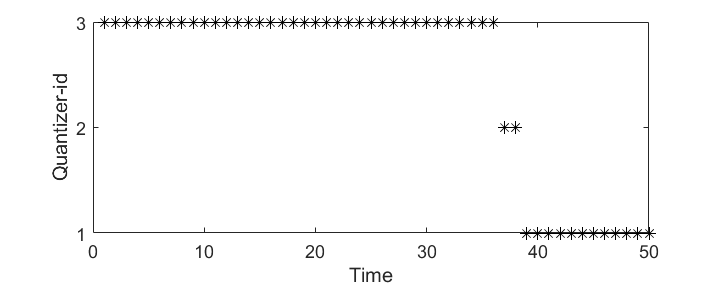}
    \caption{Optimal selection of the quantizers over time when $d_i =i$ for all $i$.}
    \label{F:rb1}
\end{figure}
\begin{figure}
    \centering
    \includegraphics[draft=false, width=0.7\textwidth]{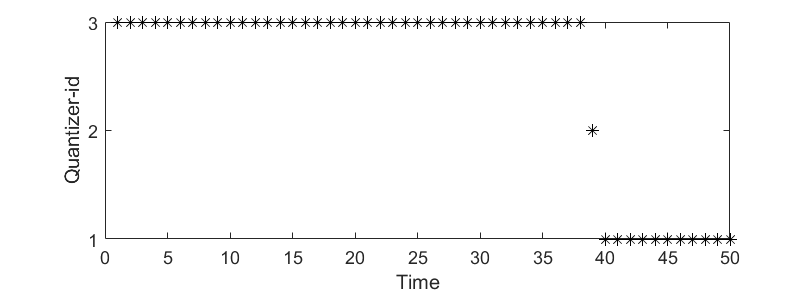}
    \caption{Optimal selection of the quantizers over time when $d_i =1$ for all $i$.}
    \label{F:rb3}
\end{figure}

The simulation was performed with three quantizers ($\Q^1,\Q^2,\Q^3$) where $\Q^i$ has $2^i$ number of quantization levels. 
The partitions associated with the quantizers are $\p^1=\{\R_+\times\R, \R_{< 0}\times \R\}$, $\p^2=\{\R_+\times \R_+,~\R_+\times\R_{< 0},~ \R_{<0}\times\R_+, \R_{<0}\times\R_{<0}\}$ and $\p^3=\{[0,1)\times\R_+, ~[1,\infty)\times\R_+,[0,1)\times\R_{<0}, ~[1,\infty)\times\R_{<0},[-1,0)\times\R_+, ~(-\infty,-1)\times\R_+,[-1,0)\times\R_{<0}, ~(-\infty,-1)\times\R_{<0}\}$. The costs associated with the quantizers are $\Lambda=[100 , 200, 300]\T$. 

We consider two scenarios where in the first scenario the delays associated with the quantizers are $d_i=i$ for all $i$, and in the second scenario, $d_i =i$ for all $i$.
Under these conditions the optimal selections for the quantizers are plotted in Figures \ref{F:rb1} and \ref{F:rb3} respectively. 
Although Figures \ref{F:rb1} and \ref{F:rb3} portray similar behavior, there are minor differences in the optimal selection of the quantizers due to the delays.  
 For example, from Figures \ref{F:rb1} and \ref{F:rb3}, one notices that at $t=37$, $\Q^3$ is optimal when $d_3=1$, whereas $\Q^2$ is optimal when $d_3=3$. 
The reason behind this is the fact that the quantized output of both $\Q^3$ and $\Q^2$ will be available with same delay when $d_i = 1$ for all $i$, whereas the quantized output of $\Q^3$ will reach later than that of $\Q^2$ when $d_i= i$, although the quantized output of $\Q^3$ will less distorted than that of $\Q^2$.
At this particular instance, it turned out to have coarser measurement faster than finer measurement with more delayed.
{Thus, this simple example reflects the combined (dual) effect of the quantization resolution and the associated delays in the optimal choice of the quantizers.}

The same example is considered when $\nu_t=0$ for all $t$, i.e., a perfect state feedback scenario. 
The optimal selections for the quantizers are plotted in Figures \ref{F:rb1_full} and \ref{F:rb3_full} respectively. 
\begin{figure}
    \centering
    \includegraphics[draft=false, width=0.7\textwidth]{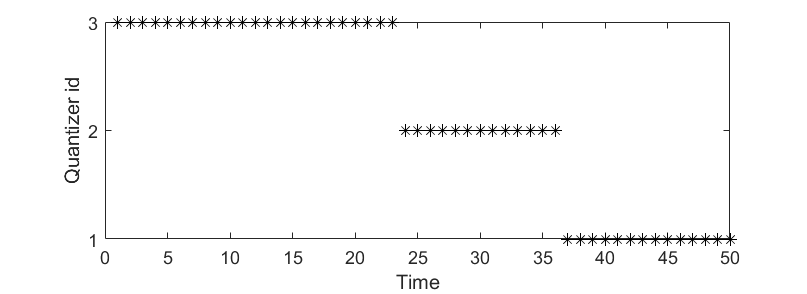}
    \caption{Optimal selection of the quantizers under perfect observation when $d_i=i$ for all $i$.}
    \label{F:rb1_full}
\end{figure}
\begin{figure}
    \centering
    \includegraphics[draft=false, width=0.7\textwidth]{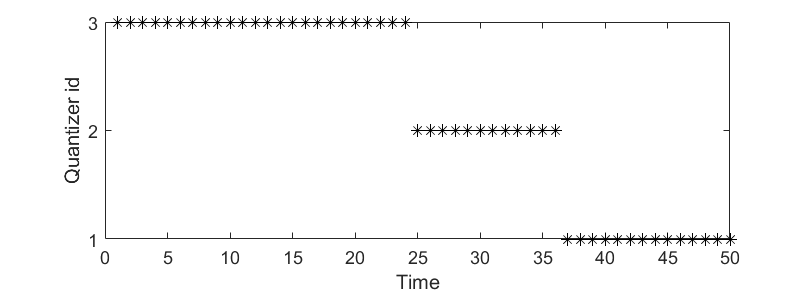}
    \caption{Optimal selection of the quantizers under perfect observation when $d_i=1$ for all $i$.}
    \label{F:rb3_full}
\end{figure}
In this perfect observation case, the system is not as keen in using the finest resolution quantization as it was for the noisy observation case; 
nonetheless, we still observe the dual effect of the quantization resolution and the associated delays in the optimal choice of the quantizers. 

\section{Conclusions} \label{S:conclusion}

In this work, we have considered a  quantization-based partially observed LQG problem with a   quantization cost. 
The problem is to choose an optimal quantizer among a set of available quantizers that minimizes the combined cost of quantization and control performance. 
The number of bits required to represent the quantized value increases as the quantization resolution gets better, and hence the delay transmitting the measurement also increases. We illustrate how the quality of quantization and delay together emerge in the cost function and we demonstrate their dual role in the optimal solution.

We have shown that the optimal controller exhibits a separation principle and it has a linear relationship with the estimate of the state. 
The optimal gains for the controller are found by solving the classical Riccati equation associated with the LQG problem. 
We have also shown that the optimal selection of the quantizers can be found by solving a linear program that can be solved offline independently.
 Furthermore, the special cases of full observation and constant delay are also discussed. 
 The possibility of the system to remain open-loop at time $t$ by not sending any quantized information,  is discussed as well in Remark \ref{R:openloop}.

The analysis of this paper relies on the idea of quantization of the innovation signal. As a future work it would be interesting to extend the similar idea beyond LQG systems.


\appendix

\section{Proof of Proposition \ref{Pr:gaussian}} \label{AP:1}

Let us consider a state-process $\xn_t$ and an observation-process $\yn_t$ as follows 
\begin{subequations} \label{E:newProcess}
\begin{align}
\xn_t&=X_t-\sum_{k=0}^{t-1}A^{t-1-k}BU_k-A^t\mu_0,\\
\yn_t&=C\xn_t+V_t.
\end{align}
\end{subequations} 
It follows that
\begin{subequations} \label{E:newProcess1}
\begin{align}
\xn_{t+1}&=A\xn_t+W_t,\\
\yn_t&=C\xn_t+V_t,\\
\xn_0&=X_0-\mu_0=W_{-1}\sim \N(0,\Sigma_x).
\end{align}
\end{subequations}
Here $\xn_t$ is the process associated with $X_t$, which is independent of the control strategy. Using this definition of $\xn_t$ and $\yn_t$, we have $X_t=\xn_t+\varphi(t,\U_{t-1})$ and $Y_t=\yn_t+C\varphi(t,\U_{t-1})$ where $\varphi(t,\U_{t-1})=\sum_{k=0}^{t-1}A^{t-1-k}BU_k+A^t\mu_0$.
Therefore, the information sets $(\Y_{t-1},\U_{t-1})$ and $(\yn_0,\ldots,\yn_{t-1},\U_{t-1})$ are equivalent, i.e., one can be constructed from the other.

The innovation process associated with system \eqref{E:newProcess1} is given by 
\begin{align*}
\xi^{\rm{new}}_t=\yn_t-\E[\yn_t|\yn_0,\ldots,\yn_{t-1}].
\end{align*}
Since $\xi_t$ is the innovation process associated with the system \eqref{E:dyn}, it can be shown that $\xi^{\rm{new}}_t=\xi_t$ for all $t$. In order to prove this statement, notice that
\begin{align*}
\xi_t=&Y_t-\E[Y_t|\Y_{t-1},\U_{t-1}]\\
=&\yn_t+C\varphi(t,\U_{t-1})-\E[\yn_t|\Y_{t-1},\U_{t-1}]-\E[C\varphi(t,\U_{t-1})|\Y_{t-1},\U_{t-1}]\\
=&\yn_t-\E[\yn_t|\yn_0,\ldots,\yn_{t-1},\U_{t-1}]\\=&\yn_t-\E[\yn_t|\yn_0,\ldots,\yn_{t-1}]=\xi^{\rm{new}}_t.
\end{align*}
Thus, $\xi_t$ does not depend on the control history $\U_{t-1}$.

The standard results of Kalman filtering hold for the process $\xn_t$ with observation $\yn_t$.
It follows that $\{\xi^{\rm{new}}_t\}_{t\in \bbN_0}$ is a sequence of uncorrelated Gaussian noises. 
Thus, using
standard Kalman filtering theory, we define 
\begin{subequations}\label{E:equations}
\begin{align}
\en_t=&\xn_t-\E[\xn_t|\yn_0,\ldots,\yn_{t-1}],\\
\Delta^{\rm{new}}_t=&\xn_t-\E[\xn_t|\yn_0,\ldots,\yn_{t}],\\
\Sigma_{t|t-1}=&\E[\en_t{\en_t}\T],\\
\Sigma_t=&\E[\Delta^{\rm{new}}_t{\Delta^{\rm{new}}_t}\T].
\end{align}
\end{subequations}
 
Moreover, 
\begin{align*}
\E[\xn_t|\yn_0,\ldots,\yn_{t}]=\E[\xn_t|\yn_0,\ldots,\yn_{t-1}]+K_t\xi^{\rm{new}}_t,
\end{align*}
where $K_t$ is the Kalman gain at time $t$.
 Thus, $\Delta^{\rm{new}}_t=\en_t-K_t\xi^{\rm{new}}_t=(I-K_tC)\en_t-K_tV_t$. The initial conditions are $\en_0=\xn_0\sim \N(0,\Sigma_x)$ and $\Sigma_{0|-1}=\Sigma_x.$
Therefore, $\E[\xi^{\rm{new}}_t]=0$ and $M_{t},\Sigma_{t|t-1}$ and $\Sigma_t$ satisfy 
\begin{align*}
M_t=&\E[(C\en_t+V_t)(C\en_t+V_t)\T]\\
=&C\Sigma_{t|t-1}C\T+\V,\\
\Sigma_{t|t-1}=&\E[\en_t{\en_t}\T]\\
=&\E[(A\Delta^{\rm{new}}_{t-1}+W_{t-1})(A\Delta^{\rm{new}}_{t-1}+W_{t-1})\T]\\
=&A\Sigma_{t-1}A\T+\W,\\
\Sigma_t=&\E[(I-K_tC)\en_t{\en_t}\T(I-K_tC)\T]+K_tC\V C\T K_t\T\\
=&(I-K_tC)\Sigma_{t|t-1}(I-K_tC)\T+K_t\V K_t\T\\
=&\Sigma_{t|t-1}-\Sigma_{t|t-1}C\T M_t^{-1} C\Sigma_{t|t-1},
\end{align*}
where $K_t=\Sigma_{t|t-1}C\T M_t^{-1}$ is the Kalman gain. This concludes the proof.\hfill $\blacksquare$

\section{Proof of Lemma \ref{L:inno}} \label{AP:inno}
Note that the information contained in $(\Y_t,\U_{t-1})$ is the same as the information contained in $(\Xi_t,\U_{t-1})$, where $\Xi_t=\{\xi_0,\ldots,\xi_t\}$. Therefore,
\begin{align*}
\E[X_t|\Y_t,\U_{t-1}]=&\E[X_t|\Xi_t,\U_{t-1}]\\
=&\E[\xn_t|\Xi_t,\U_{t-1}]+\sum_{k=0}^{t-1}A^{t-1-k}BU_k+A^t\mu_0\\
=&\E[\xn_t|\Xi^{\rm{new}}_t]+\sum_{k=0}^{t-1}A^{t-1-k}BU_k+A^t\mu_0,
\end{align*}
where $\Xi^{\rm{new}}_t=\{\xi^{\rm new}_t\}_{t \in \bbN_0}=\{\xi_t\}_{t \in \bbN_0}=\Xi_t$. From the theory of Kalman filtering, it follows that
\begin{align*}
\E[\xn_t|\Xi^{\rm{new}}_t]=&\E[\xn_t|\Xi^{\rm{new}}_{t-1}]+K_t\xi^{\rm new}_t\\
=&A\E[\xn_{t-1}|\Xi^{\rm{new}}_{t-1}]+K_t\xi^{\rm new}_t,
\end{align*}
since $W_{t-1}$ is independent of $\Xi^{\rm{new}}_{t-1}$. We need to show that 
\begin{align} \label{E:innohypo}
\E[\xn_t|\Xi^{\rm{new}}_t]=\sum_{k=0}^t\Psi(t,k)\xi_k^{\rm new},
\end{align}
for some $\Psi(t,k)$. We show this by induction. To this end,
notice that \eqref{E:innohypo} is true for $t=0$ with $\Psi(0,0)= \Sigma_x C\T (C\Sigma_x C\T+\V)^{-1}$, where $\Sigma_x$ is the covariance of the initial state $X_0$.
Next, if \eqref{E:innohypo} is true for $t=\tau$, then we have that, for $t=\tau+1$,
\begin{align*}
\E[\xn_{\tau+1}|\Xi^{\rm{new}}_{\tau+1}]=&A\E[\xn_{\tau}|\Xi^{\rm{new}}_{\tau}]+K_{\tau+1}\xi^{\rm new}_{\tau+1}\\
=&A\sum_{k=0}^{\tau}\Psi(\tau,k)\xi_k^{\rm new}+K_{\tau+1}\xi^{\rm new}_{\tau+1}\\
=&\sum_{k=0}^{\tau+1}\Psi(\tau+1,k)\xi_k^{\rm new},
\end{align*}
where $K_{\tau+1}$ is the Kalman gain at time $\tau+1$, $\Psi(\tau+1,k)=A\Psi(\tau,k)$ for all $k=0,\ldots,\tau$, and $\Psi(\tau+1,\tau+1)=K_{\tau+1}$. Therefore, for all $t\ge k$, $\Psi(t,k)=A^{t-k}K_k=A^{t-k}\Sigma_{k|k-1}C\T M_k^{-1}$, and \vspace{-5pt}
\begin{align} \label{E:X_given_Y_U}
\begin{split}
\E[X_t|\Y_t,\U_{t-1}]=&\E[\xn_t|\Xi^{\rm{new}}_t]+\sum_{k=0}^{t-1}A^{t-1-k}BU_k+A^t\mu_0\\
=&\sum_{k=0}^t\Psi(t,k)\xi_k^{\rm new}+\sum_{k=0}^{t-1}A^{t-1-k}BU_k+A^t\mu_0\\
=&\sum_{k=0}^t\Psi(t,k)\xi_k+\sum_{k=0}^{t-1}A^{t-1-k}BU_k+A^t\mu_0.
\end{split}
\end{align}
 The set $\Z_t$ may not contain all the elements of $\Xi_t$ due to delays.
 In fact, for $k\le t$, we have that $\xi_k \in \Z_t$ if and only if $\vartheta_{k,t}=1$. Since $\xi_k$ and $\xi_t$ are independent for $t\ne k$, we have
\begin{align*}
\E[\xi_k|\Z_t]=\begin{cases}
\xi_k, ~~~~ &{\textrm{if   }} \xi_k \in \Z_t,\\
0,~~~&{\rm otherwise}.
\end{cases}
\end{align*}
 Therefore, we can write $\E[\xi_k|\Z_t]=\vartheta_{k,t}\xi_k$. Thus,
 \begin{align*}
 \E[X_t|\Z_t,\U_{t-1}]=&\E[\E[X_t|\Xi_t,\U_{t-1}]|\Z_t,\U_{t-1}]\\
 =&\E\left[\sum_{k=0}^t\Psi(t,k)\xi_k|\Z_t,\U_{t-1}\right]+\sum_{k=0}^{t-1}A^{t-1-k}BU_k+A^t\mu_0\\
 =&\sum_{k=0}^t\Psi(t,k)\vartheta_{k,t}\xi_k+\sum_{k=0}^{t-1}A^{t-1-k}BU_k+A^t\mu_0.
 \end{align*}
 This completes the proof.
\hfill $\blacksquare$
 
 \section{Proof of Lemma \ref{L:et_second_moment}} \label{A:delta}

 Let us define $\Delta_t = \E[e_t \mid \Y_t,\hat{\mO}_t, \U_{t-1}]$, and
 notice that,
 \begin{align*}
     \E[e_t e_t\T \mid \Y_t,\hat{\mO}_t, \U_{t-1}] = \E[(e_t - \Delta_t)(e_t -\Delta_t)\T \mid \Y_t,\hat{\mO}_t, \U_{t-1}] +\E[\Delta_t \Delta_t\T \mid \Y_t,\hat{\mO}_t, \U_{t-1}],
 \end{align*}
 since $\E[\Delta_t(e_t-\Delta)\T \mid \Y_t,\hat{\mO}_t, \U_{t-1}]= \Delta_t\E[(e_t-\Delta)\T \mid \Y_t,\hat{\mO}_t, \U_{t-1}] = 0$.
 Taking expectations on both sides of the last equation, we obtain
 \begin{align} \label{E:et_et_transpose}
     \E[e_te_t\T] =  \E[(e_t - \Delta_t)(e_t -\Delta_t)\T] +\E[\Delta_t \Delta_t\T].
 \end{align}

 Substituting the expression of $\bar X_t$ from \eqref{E:x_bar_new} in  $e_t =X_t - \bar X_t$, yields
 \begin{align*}
     e_t = X_t -\sum_{k=0}^{t-1}A^{t-1-k}BU_k - A^t\mu_0 - \sum_{k=0}^t\Psi(t,k)\vartheta_{k,t}\bar\xi_k.
 \end{align*}
 Therefore,
 \begin{align} \label{E:delta}
 \begin{split}
     \Delta_t =& \E[e_t\mid \Y_t,\hat{\mO}_t, \U_{t-1}] \\
     =&\E[X_t\mid \Y_t,\hat{\mO}_t, \U_{t-1}] - \sum_{k=0}^{t-1}A^{t-1-k}BU_k - A^t\mu_0 - \sum_{k=0}^t\Psi(t,k)\vartheta_{k,t}\bar\xi_k \\
     =& \sum_{k=0}^t\Psi(t,k) \xi_k - \sum_{k=0}^t\Psi(t,k)\vartheta_{k,t}\bar\xi_k \\
     =& \sum_{k=0}^t\Psi(t,k) ( \xi_k - \E[\xi_k \mid \hat \mO_t]) 
     \end{split}
 \end{align}
 where we have used $\E[X_t\mid \Y_t,\hat{\mO}_t,\U_{t-1}] = \E[X_t\mid \Y_t, \U_{t-1}]$ since $\hat{\mO}_t$ is a $\Y_t$-measurable function and we have also used \eqref{E:xi_expected} to write $\vartheta_{k,t}\bar\xi_k$ as $\E[\xi_k \mid \hat \mO_t]$.
  Using the expression of $\Delta_t$ from \eqref{E:delta}, we obtain
 \begin{align*}
     e_t -\Delta_t =&   X_t -\sum_{k=0}^{t-1}A^{t-1-k}BU_k - A^t\mu_0 -\sum_{k=0}^t\Psi(t,k) \xi_k \\
     =& \xn_t - \E[\xn_t\mid \Xi^{\rm{new}}_t] = \Delta^{\rm{new}}_t,
 \end{align*}
 where $\xn_t$, $ \E[\xn_t\mid \Xi^{\rm{new}}_t]$, and $\Delta^{\rm{new}}_t$ are defined in equations \eqref{E:newProcess} and \eqref{E:innohypo}, and \eqref{E:equations}, respectively.
  Thus, we may rewrite \eqref{E:et_et_transpose} as follows
 \begin{align}
 \begin{split} \label{eq:et_et_transpose_2}
     \E[&e_te_t\T] =  \E[(e_t - \Delta_t)(e_t -\Delta_t)\T] +\E[\Delta_t \Delta_t\T] \\
     &= \E[\Delta^{\rm{new}}_t\!{\Delta^{\rm{new}}_t}\T]\! +\!\! \sum_{k=0}^t\!\sum_{\ell =0 }^t \!\Psi(t,k) \E[( \xi_k - \E[\xi_k \mid \hat \mO_t])( \xi_\ell - \E[\xi_\ell \mid \hat \mO_t])\T]\Psi(t,\ell)\T\\
     &= \Sigma_t\! +\!\! \sum_{k=0}^t\!\sum_{\ell =0 }^t \!\Psi(t,k) \E[( \xi_k - \E[\xi_k \mid \hat \mO_t])( \xi_\ell - \E[\xi_\ell \mid \hat \mO_t])\T]\Psi(t,\ell)\T
     \end{split}
 \end{align}
where we used the definition $\Sigma_t=\E[\Delta^{\rm{new}}_t{\Delta^{\rm{new}}_t}\T]$ from \eqref{E:equations}.

To further simplify \eqref{eq:et_et_transpose_2}, we recall that $\xi_k$ and $\xi_\ell$ are independent random variables when $k\ne \ell$ and therefore, we obtain
\begin{align*}
    \E[( \xi_k - \E[\xi_k \mid \hat \mO_t])( \xi_\ell - \E[\xi_\ell \mid \hat \mO_t])\T] =& \E \big[\E[( \xi_k - \E[\xi_k \mid \hat \mO_t])( \xi_\ell - \E[\xi_\ell \mid \hat \mO_t])\T\mid \xi_k, \hat \mO_t]\big] \\
    =&\E \big[( \xi_k - \E[\xi_k \mid \hat \mO_t]) \E[( \xi_\ell - \E[\xi_\ell \mid \hat \mO_t])\T\mid \xi_k, \hat \mO_t] \big] \\
    =& \E [( \xi_k - \E[\xi_k \mid \hat \mO_t]) ( \E[\xi_\ell\mid \xi_k, \hat \mO_t] - \E[\xi_\ell \mid \hat \mO_t])\T] \\
    =& \E [( \xi_k - \E[\xi_k \mid \hat \mO_t]) ( \E[\xi_\ell \mid \hat \mO_t] - \E[\xi_\ell \mid \hat \mO_t])\T]  = 0
\end{align*}
for all $k \ne \ell$. On the other hand, for $k=\ell$, we obtain
\begin{align*}
     \E[( \xi_k - \E[\xi_k \mid \hat \mO_t])( \xi_k - \E[\xi_k \mid \hat \mO_t])\T] =& \E[\xi_k\xi_k\T] - \E[ \E[\xi_k \mid \hat \mO_t] \E[\xi_k \mid \hat \mO_t]\T] \\
     \overset{(a)}{=}& M_k - \E[\vartheta_{k,t}\bar \xi_k\bar \xi_k\T] \\
     =& M_k - \E[\E[\vartheta_{k,t}\bar \xi_k\bar \xi_k\T\mid \theta_k]] \\
     \overset{(b)}{=}& M_k -\E[\vartheta_{k,t}\E[\bar \xi_k\bar \xi_k\T\mid \theta_k]]\\
     \overset{(c)}{=}& M_k -\E[\vartheta_{k,t}F_k(\theta_k)]
\end{align*}
where (a) follows from \eqref{E:xi_expected} and the fact that $\vartheta_{k,t}^2 = \vartheta_{k,t}$ since $\vartheta_{k,t} \in \{0,1\}$, and (b) follows from the fact that $\vartheta_{k,t}$ is a deterministic function of $\theta_k$ due to \eqref{E:vartheta}, and finally, (c) follows from  \eqref{E:F}.
Consequently, \eqref{eq:et_et_transpose_2} reduces to the following equation
\begin{align*}
    \begin{split} \label{eq:et_et_transpose_3}
     \E[&e_te_t\T] =  \Sigma_t\! +\!\! \sum_{k=0}^t \!\Psi(t,k) (M_k -\E[\vartheta_{k,t}F_k(\theta_k)])\Psi(t,k)\T.
     \end{split} ~~~~~~~~~~~\hfill \blacksquare
\end{align*}

\bibliographystyle{siamplain}
\bibliography{arXiv}

\end{document}